\documentclass[11pt]{article}
\usepackage[a4paper,margin=27mm]{geometry}
\usepackage{amsmath,amssymb,amsthm,graphicx,booktabs,placeins,microtype}
\usepackage{xurl}
\usepackage[colorlinks=true,linkcolor=blue,citecolor=blue,urlcolor=blue]{hyperref}
\newtheorem{theorem}{Theorem}[section]
\newtheorem{proposition}[theorem]{Proposition}
\newtheorem{lemma}[theorem]{Lemma}

\newcommand{\Reg}{\operatorname{Reg}}
\graphicspath{{figures/}}
\title{Hyperelliptic sigma solutions of polynomial Hamiltonian systems}
\author{Graham Hesketh}
\date{October 2026}
\begin{document}
\maketitle
\begin{abstract}
We study polynomial Hamiltonian systems whose Hamiltonians consist of two weighted products of canonical modes minus a polynomial in their powers. This class extends the structure of familiar genus-one coupled-mode systems from nonlinear optics. We construct local analytic solutions in terms of sigma functions and Abelian integrals on smooth odd-degree hyperelliptic curves, subject to stated regularity conditions. We associate \^Onishi's determinant identities with the conservation-law structure of these systems, extending the connection previously identified between Frobenius--Stickelberger determinants and conserved quantities in genus-one nonlinear optics. Explicit genus-two and genus-three examples include bounded oscillatory scalar motion. We also classify the Hamiltonian polynomials that eliminate the additional mode exponentials and show that their elementary modes solve second-order equations with self-generating polynomial potentials. We evaluate the solutions using Generapy, a purpose-built Python package for algebraic curves and Abelian functions, and validate them against independent fourth-order Runge--Kutta integration.
\end{abstract}

\section{Introduction}

Many algebraic coupled-mode systems in nonlinear optics admit genus-one elliptic solutions. Examples include four-wave mixing, polarization dynamics and coherent couplers in cubic media, and two-mode and three-mode mixing in quadratic media. These systems are characterized by a conserved polynomial Hamiltonian and Manley--Rowe-like intermodal power laws. Together with their elliptic reductions, these conservation laws reveal a common mathematical structure~\cite{agrawal,armstrong,jensen,stolen,chen-snyder,chen,envelopes,fourwave,coupler}.

The same viewpoint includes the polynomial differential systems defining Jacobi, lemniscatic and Dixon elliptic functions~\cite{dixon}. The coupled-mode systems also admit projections onto the unforced, undamped Duffing oscillator, the Euler top and the Lagrange top. These connections link optical models to the classical and geometric mechanics of integrable systems; background accounts are given in~\cite{holm,holm-rigid,whittaker-dynamics,mechanics,duffing}.

Optical solutions commonly express modal powers through Jacobi or Weierstrass elliptic functions and recover phases through elliptic integrals. In~\cite{envelopes,fourwave,coupler}, we expressed the full complex amplitudes as shifted Weierstrass sigma ratios with exponential factors. In the genus-one setting, we also associated the Frobenius--Stickelberger determinant identities~\cite{frobenius-stickelberger} with the intermodal power laws and the conserved Hamiltonian~\cite{fourwave,coupler}.

The mode count and interaction multiplicities are integer parameters; in optics, the interaction degree is related to the material nonlinearity. Increasing these parameters can produce scalar curves of genus greater than one. What then replaces the elliptic sigma solutions and their determinant conservation laws?

\^Onishi's identities~\cite{onishi} generalize the Frobenius--Stickelberger determinant to hyperelliptic curves of arbitrary genus. They provide a natural setting in which to ask whether the association between sigma solutions, determinant identities and conservation laws extends to higher-genus canonical systems.

The theory of hyperelliptic and, more broadly, Abelian functions has its roots in the nineteenth-century work of Abel, Jacobi, G\"opel, Rosenhain, Weierstrass, Riemann, Klein and Poincar\'e, among others. Baker's books at the turn of the twentieth century brought together and substantially enhanced much of this theory~\cite{baker-abelian,baker-periodic}. A renewed connection with nonlinear integrable equations emerged in the 1970s through finite-gap integration and algebro-geometric solutions, in work by Novikov, Dubrovin, Its, Matveev, Krichever and others~\cite{krichever,finite-gap}. For historical background and introductions to sigma functions, we refer the reader to Baker's books, the review by Buchstaber, Enolski and Leykin~\cite{bel-sigma}, Matsutani's monograph~\cite{matsutani-book} and the references therein.

Buchstaber, Enolskii and Leykin~\cite{bel-kleinian} give a modern treatment of hyperelliptic Kleinian functions, which generalise the Weierstrass elliptic functions to higher genus, including algebraic Jacobi inversion and applications to integrable equations. For the scalar motion considered here, the relevant inversion takes place on the one-point Abel image rather than at a general point of the Jacobian. Matsutani and Previato~\cite{mp-strata,mp-strata-ii} study inversion and sigma-derivative identities on these lower-dimensional strata for cyclic curves, including the odd-degree hyperelliptic family. Enolski and collaborators~\cite{enolski,thirdkind} develop sigma-based inversion methods and apply them to relativistic particle motion, including black-hole space-times. Their third-kind integral formulas, together with the hyperelliptic prime-form representation of Gibbons, Matsutani and \^Onishi~\cite{primeform}, provide tools relevant to individual-mode reconstruction.

The complete mode products also connect with algebraic interpolation and addition laws. Bernatska~\cite{bernatska-fields} constructs curve functions with prescribed zeros through normalized interpolation determinants, while Bernatska and Kopeliovich~\cite{divisor-addition} use hyperelliptic interpolation to describe divisor addition and reduction. These constructions clarify the relation between the determinants, their complementary zeros and the polynomial interpolation used herein. Beyond arguments restricted to individual curve points, Eilbeck, Enolskii and Previato~\cite{eep-addition} derive a genus-two Frobenius--Stickelberger addition formula for three general Jacobian arguments.

\subsection{Our contribution}\label{sec:intro-contribution}

We consider $N$ independent canonical pairs of complex-valued functions $u_j(z),v_j(z)$ of an evolution variable $z$, with Hamiltonian
\[
H(u,v)=\prod_{j=1}^{N}u_j^{\beta_j}
       +\prod_{j=1}^{N}v_j^{\beta_j}
       -Q(I_1,\ldots,I_N),
\]
where $I_j(z)=u_j(z)v_j(z)$ are generally complex quantities, the weights $\beta_j$ are positive integers, and $Q$ is a polynomial in the $I_j$. We call these paired products modal powers, without assuming that they are real or positive. Manley--Rowe-like conservation laws relate their evolution through conserved weighted differences, stated in \eqref{eq:weighted-invariants}. Together with the Hamiltonian, these laws reduce the power dynamics to a single scalar equation on a hyperelliptic curve. Recovering the individual modes requires further integration, since their paired products alone do not determine them.

This reduction relates our systems to the resonant Hamiltonians studied by Martin Kummer~\cite{kummer,holm} and their subsequent treatment through classical and quantum Kummer shape algebras by Odzijewicz and Wawreniuk~\cite{shapes}.\footnote{Here Kummer refers to Martin Kummer, distinct from Ernst Eduard Kummer of algebraic number theory and algebraic geometry.} The conserved weighted differences arise from symmetries of the canonical modes, while the products entering the Hamiltonian describe their collective interaction.

We construct local solutions for the individual complex modes using hyperelliptic sigma functions and Abelian integrals. To our knowledge, this is the first identification of \^Onishi's hyperelliptic determinant identities with the Manley--Rowe-like conservation laws and Hamiltonian structure of polynomial canonical mode systems, extending our earlier genus-one results in nonlinear optics. The two-point identity yields the paired-power relations underlying these conservation laws, while the multipoint identities represent the complete weighted products entering the conserved Hamiltonian. We identify the individual mode factors within these determinants and explain the cancellation of their exponential corrections. Genus-two and genus-three examples illustrate the construction, including bounded oscillatory scalar motion in genus three, and are checked against independent integration of the full canonical equations.

We also classify all polynomials $Q$ for which the additional correction vanishes for every invariant tuple. The nonconstant generators are specific Chan--Chyan--Srivastava multivariable Lagrange polynomials. Their mode pairs have constant Wronskians and solve second-order equations with polynomial potentials; polynomiality fixes the squared Wronskian under a prescribed-product constraint.

For numerical evaluation we use Generapy~\cite{generapy}, a purpose-built Python library for algebraic curves and Abelian functions, which supplies periods, Abelian integrals and sigma functions. Generapy is built on mpmath~\cite{mpmath} for arbitrary-precision numerical arithmetic. During its development, we tested Generapy by reproducing several examples from the literature, including the period and Kleinian-function computations for Bernatska's genus-three trigonal curve~\cite{bernatska-computation}; the reproduced examples are described in the Generapy documentation~\cite{generapy}.

Sections~\ref{sec:setup}--\ref{sec:modes} develop scalar reduction, the sigma representation of the scalar coordinate and mode reconstruction; Section~\ref{sec:onishi-products} establishes the determinant connection, and Section~\ref{sec:polynomial-potentials} derives the second-order polynomial potentials. Sections~\ref{sec:examples} and~\ref{sec:numerical} present the examples and comparisons with independent fourth-order Runge--Kutta integration. The solutions are local and apply to smooth odd-degree hyperelliptic curves, subject to the regularity conditions stated in the main results.

\section{Scalar reduction and separated mode equations}\label{sec:setup}

We start from the polynomial Hamiltonian and derive the common scalar motion of the modal powers. The same calculation expresses each mode equation in terms of that scalar trajectory and separates its logarithmic derivative into a fixed-point contribution and a polynomial correction. These are the two terms integrated in the following sections.

\subsection{Hamiltonian equations and conserved powers}

Let $u_j(z),v_j(z)\in\mathbb C$, $j=1,\ldots,N$, be independent complex canonical modes, with primes denoting derivatives with respect to $z$. Define

\[
I_j=u_jv_j,\qquad
U=\prod_{j=1}^N u_j^{\beta_j},\qquad
V=\prod_{j=1}^N v_j^{\beta_j},
\]

where the weights $\beta_j$ are positive integers. We consider

\begin{equation}\label{eq:hamiltonian}
H(u,v)=U+V-Q(I_1,\ldots,I_N),
\end{equation}

with $Q$ a polynomial in the powers. Here ``power'' denotes $I_j=u_jv_j$ and need not be real or positive. Write $Q_j=\partial Q/\partial I_j$, treating the powers as independent arguments of $Q$ before evaluating them along a trajectory. Hamilton's equations are

\begin{equation}\label{eq:canonical}
\begin{aligned}
u_j'&=\frac{\partial H}{\partial v_j}
=\beta_jv_j^{\beta_j-1}\prod_{\substack{1\le k\le N\\k\ne j}}v_k^{\beta_k}-Q_j(I)u_j,\\
v_j'&=-\frac{\partial H}{\partial u_j}
=-\beta_ju_j^{\beta_j-1}\prod_{\substack{1\le k\le N\\k\ne j}}u_k^{\beta_k}+Q_j(I)v_j.
\end{aligned}
\end{equation}

The Hamiltonian is conserved because it has no explicit dependence on $z$. Multiplying the first equation by $v_j$ and the second by $u_j$ cancels the terms containing $Q_j$ and gives

\[
I_j'=\beta_j(V-U).
\]

Thus every normalized power $I_j/\beta_j$ has the same derivative. In particular, the $N-1$ independent weighted differences

\begin{equation}\label{eq:weighted-invariants}
C_j=\beta_N I_j-\beta_j I_N,\qquad j=1,\ldots,N-1,
\end{equation}

are conserved. These are the Manley--Rowe-like intermodal power laws of the canonical system.

\subsection{A common scalar coordinate}

The conserved differences allow all powers to be written using a single scalar coordinate. Fix the additive freedom in the scalar coordinate by defining

\[
\rho(z)=-\frac{\sum_{j=1}^{N} I_j(z)}{\sum_{j=1}^{N}\beta_j},\qquad
\gamma_j=I_j+\beta_j\rho.
\]

The power equations imply

\[
\rho'=U-V,\qquad \gamma_j'=0,\qquad \sum_{j=1}^{N}\gamma_j=0.
\]

Consequently

\begin{equation}\label{eq:power-restriction}
I_j=\gamma_j-\beta_j\rho,
\end{equation}

and the conserved differences take the form

\[
\beta_k I_j-\beta_j I_k
=\beta_k\gamma_j-\beta_j\gamma_k.
\]

The constants $\gamma_j$ are therefore fixed by the initial powers, while $\rho$ describes their common evolution. Equivalently, $-\rho$ is the weighted mean of the normalized powers $I_j/\beta_j$, with weights $\beta_j$. 

On this power line, define the polynomial restrictions

\begin{equation}\label{eq:qr-defs}
\begin{aligned}
q(\rho)&=Q(\gamma_1-\beta_1\rho,\ldots,\gamma_N-\beta_N\rho),\\
r(\rho)&=\prod_{j=1}^N(\gamma_j-\beta_j\rho)^{\beta_j}=UV.
\end{aligned}
\end{equation}

The constant term of $Q$ sets the zero of energy without changing the equations of motion. We choose it so that the trajectory under consideration lies on $H=0$, giving $U+V=q(\rho)$. Combining the sum and difference of $U,V$ gives

\begin{equation}\label{eq:products}
U=\frac{q(\rho)+\rho'}2,\qquad
V=\frac{q(\rho)-\rho'}2,
\end{equation}

and therefore

\begin{equation}\label{eq:scalar}
(\rho')^2=(U-V)^2=(U+V)^2-4UV
=q(\rho)^2-4r(\rho)=F(\rho).
\end{equation}

This is the scalar curve governing the modal powers. Here and below, $q_\rho$ and $F_\rho$ denote polynomial derivatives with respect to $\rho$. Along a nonconstant trajectory the scalar equation also gives

\[
\rho''=\frac12F_\rho(\rho).
\]

It also follows directly by differentiating $\rho'=U-V$ with Hamilton's equations, so it remains valid at turning points, where $\rho'=0$. It determines continuation there without choosing a square-root sign anew.

\subsection{Separated equations for the individual modes}

Knowing the powers does not yet determine the individual modes. On a chart where $I_j\ne0$, divide the canonical equations by $u_j,v_j$ and substitute the product formulas \eqref{eq:products}:

\begin{equation}\label{eq:mode-log}
\begin{aligned}
\frac{u_j'}{u_j}
&=\frac{\beta_jV}{I_j}-Q_j(I)
=\frac{q(\rho)-\rho'}{2[\gamma_j/\beta_j-\rho]}-Q_j(I(\rho)),\\
\frac{v_j'}{v_j}
&=-\frac{\beta_jU}{I_j}+Q_j(I)
=\frac{-q(\rho)-\rho'}{2[\gamma_j/\beta_j-\rho]}+Q_j(I(\rho)).
\end{aligned}
\end{equation}

Each right-hand side depends only on the scalar trajectory $\rho,\rho'$. Once that trajectory is known, the modes follow by separate logarithmic integrations, with initial amplitudes inherited from the original canonical system.

The constants $\gamma_j/\beta_j$ and $q(\gamma_j/\beta_j)$ can be expressed as coordinates of fixed points on the scalar curve. Introduce $\mu_j$ by

\begin{equation}\label{eq:mode-points}
\rho(\mu_j)=\frac{\gamma_j}{\beta_j},\qquad
\rho'(\mu_j)=q(\rho(\mu_j)).
\end{equation}

The notation $\mu_j$ labels the fixed point with the displayed coordinates, which lies on the scalar curve because $r(\rho(\mu_j))=0$. Adding and subtracting the fixed polynomial value gives

\begin{equation}\label{eq:split-log}
\begin{aligned}
\frac{u_j'}{u_j}
&=\frac{\rho'(\mu_j)-\rho'}{2[\rho(\mu_j)-\rho]}+B_j(\rho),\\
\frac{v_j'}{v_j}
&=\frac{-\rho'(\mu_j)-\rho'}{2[\rho(\mu_j)-\rho]}-B_j(\rho),
\end{aligned}
\end{equation}

where

\begin{equation}\label{eq:B-def}
B_j(\rho)=
\frac{q(\rho)-q(\rho(\mu_j))}{2[\rho(\mu_j)-\rho]}
-Q_j(I(\rho)).
\end{equation}

The divided difference is polynomial because its numerator vanishes at $\rho=\rho(\mu_j)$; the restricted partial derivative is polynomial as well. Thus $B_j$ is polynomial, determined by the original Hamiltonian. In particular, differentiating $Q$ before restricting it to the power line retains information needed for individual modes that is not contained in $q$ alone.

\subsection{Hyperelliptic scope of this paper}

We express the separated mode integrals using sigma functions and Abelian differentials in the following setting:

\begin{itemize}
\item $F$ is squarefree of odd degree $2g+1$, with $g\ge2$.
\item The fixed coordinates $\rho(\mu_j)$ are distinct.
\end{itemize}

The first condition gives a smooth compactified hyperelliptic curve of genus $g$ with one point at infinity. The sigma formulas are presented in the leading-four convention. For a general nonzero leading coefficient $f$, set $\tau=sz$ and $Y=\rho'/s$, with $s^2=f/4$, so that $Y^2=F/s^2$ has leading coefficient four. This is a reparametrization of propagation for curve evaluation; the original modes and Hamiltonian are retained. The corresponding mode formula is specified after Theorem~\ref{thm:mode-solutions}. The weights $\beta_j$ encode the multiplicities of coincident mode points after a compatible degenerate-mode reduction; distinctness refers to the resulting weighted list, rather than to each occurrence in its divisor. At a branch fixed point $a_j=\rho(\mu_j)$, we have $q(a_j)=0$ and $F'(a_j)=-4r'(a_j)\ne0$, so smoothness forces $\beta_j=1$.

Even-degree models and singular curves require corresponding changes to the curve representation and are not treated here.

\section{Hyperelliptic representation of the scalar coordinate \texorpdfstring{$\rho$}{rho}}\label{sec:scalar}

The scalar reduction supplies the curve, its fixed mode points and the separated mode equations. We first express the scalar coordinate $\rho$ through derivatives of the hyperelliptic sigma function restricted to the one-point Abel image; Section~\ref{sec:modes} then reconstructs the individual modes. We specify the curve, Abel coordinates and sigma conventions, state the coordinate representation and its low-genus forms, and finally describe their evaluation at prescribed propagation.

\subsection{The scalar curve}

Let the smooth compactification of
\begin{equation}\label{eq:curve}
y^2=F(\rho)=4\rho^{2g+1}+\cdots
\end{equation}
have genus $g\ge2$, with its unique point at infinity. The variable $y$ denotes the curve ordinate; along a nonconstant scalar trajectory, $y=\rho'(z)$. Choose an initial finite curve point $(\rho(z_0),\rho'(z_0))$ satisfying the scalar equation~\eqref{eq:scalar}. The moving point determines a one-point Abel image on which the coordinate can be recovered by sigma derivatives.

\subsection{Abel coordinates and propagation}

The holomorphic differentials and the Abel lift of the moving point are
\begin{equation}\label{eq:holomorphic-diffs}
\omega_k=\frac{\rho^{k-1}\,d\rho}{y},\qquad k=1,\ldots,g,
\end{equation}
\begin{equation}\label{eq:abel-map}
w_k(z)=\int_\infty^{(\rho(z),\rho'(z))}\omega_k.
\end{equation}
Choose paths from infinity to the initial moving point and the fixed mode points, and continue the moving path with the trajectory. The Abel vectors are lifts, rather than vectors reduced modulo periods; subsequent sigma expressions and second-kind integrals use compatible path choices. Along the trajectory,
\begin{equation}\label{eq:w-derivatives}
w_1(z)=w_1(z_0)+z-z_0,\qquad w_k'(z)=\rho(z)^{k-1}.
\end{equation}
The initial offset is retained because the Abel map is based at infinity, whereas $z_0$ is an arbitrary propagation origin. Only $w_1$ is linear in propagation. For genus one this sole coordinate is enough to evaluate the familiar elliptic solution directly. For $g\ge2$, the remaining coordinates are constrained by the requirement that the full Abel vector represent a single curve point. They cannot be replaced by independently chosen linear functions of $z$.

\subsection{Sigma and second-kind normalization}

Write $F(\rho)=\sum_{m=0}^{2g+1}\lambda_m\rho^m$, where $\lambda_{2g+1}=4$, and put $\lambda_m=0$ outside this range. We use the algebraic second-kind basis \cite[Eq.~II.2]{thirdkind}
\begin{equation}\label{eq:second-kind-diffs}
dr_k=\frac{1}{4y}\sum_{m=k}^{2g+1-k}
(m+1-k)\lambda_{m+1+k}\rho^m\,d\rho,\qquad k=1,\ldots,g.
\end{equation}
Together with \eqref{eq:holomorphic-diffs}, this fixes the Kleinian sigma exponential convention. More explicitly, let $2\omega,2\omega'$ be the first-kind period matrices and $2\eta=-\oint_a dr$ the second-kind $a$-period matrix in the same homology basis. Then
\begin{equation}\label{eq:sigma-definition}
\sigma(w)=C\exp\!\left(\tfrac12w^T\kappa w\right)
\theta[\delta]\bigl((2\omega)^{-1}w;\tau\bigr),\quad
\kappa=\eta\omega^{-1},\quad \tau=\omega^{-1}\omega',
\end{equation}
where $\theta[\delta]$ is the Riemann theta function with $\delta$ chosen as the characteristic of the Riemann constant based at infinity. The constant $C$ fixes the leading Schur--Weierstrass polynomial. We use \^Onishi's sigma normalization \cite{onishi}, as implemented in generapy:
\begin{align}
\sigma(w)&=w_1-\tfrac13w_2^3+\text{higher-weight terms},&&g=2,\label{eq:sigma-leading-g2}\\
\sigma(w)&=w_1w_3-w_2^2-\tfrac13w_2w_3^3+\tfrac1{45}w_3^6
+\text{higher-weight terms},&&g=3.\label{eq:sigma-leading-g3}
\end{align}
The polynomial for genus three also agrees with \cite[Eq.~3.39]{enolski}. References to multipoint determinant identities are understood with \^Onishi's published errata. The two-point addition identity used below retains the sign stated in \cite[Lemma~7.1]{onishi}.

Let $W_n$ denote the Abel image of effective degree-$n$ divisors, based at infinity. For $g\ge2$, $W_1\subset W_{g-1}$, and $W_{g-1}$ is the sigma zero divisor. Consequently $\sigma(w(z))=0$. In genus three the one-point image also satisfies $\sigma_3=\sigma_{33}=0$ \cite[Section~VI]{thirdkind}.

Sigma subscripts denote partial derivatives in the Abel coordinates. The derivatives used for the mode kernel are \cite{onishi}
\begin{equation}\label{eq:sharp-flat}
\sigma_\sharp=\left(\prod_{\substack{2\le k\le g\\k\ {\rm even}}}\partial_{w_k}\right)\sigma,
\qquad
\sigma_\flat=\left(\prod_{\substack{3\le k\le g\\k\ {\rm odd}}}\partial_{w_k}\right)\sigma.
\end{equation}
An empty product takes no derivative: $(\sigma_\sharp,\sigma_\flat)=(\sigma_2,\sigma)$ for genus two and $(\sigma_2,\sigma_3)$ for genus three. These kernel factors are distinct from the derivative ratios used to recover $\rho$.

\subsection{From Abel inversion to a sigma coordinate ratio for \texorpdfstring{$\rho(z)$}{rho(z)}}

We now express the coordinate of the moving curve point through sigma functions evaluated at its Abel vector. This is the one-point restriction of Jacobi inversion: the arguments lie on $W_1$, rather than being arbitrary points of the Jacobian. The resulting formula is an analytic representation on that constrained image; evaluation at a given $z$ also requires finding the corresponding Abel vector, as discussed in Section~\ref{sec:scalar-evaluation}.

Ordinary Jacobi inversion starts with the summed Abel vector of $g$ curve points and recovers their coordinates. Our trajectory contains only one moving point, so we use a limiting restriction of that problem. Sending $g-1$ divisor points to infinity leaves a single finite point. Ordinary sigma quotients can then have vanishing numerators and denominators, so one must expand in local coordinates at infinity and retain the first nonzero derivatives. Hyperelliptic addition and determinant identities control these limits. The resulting derivative indices are determined by the gap sequence at infinity.

For the differential ordering \eqref{eq:holomorphic-diffs}, the general result is particularly concise. Define
\begin{equation}\label{eq:inversion-operator}
\mathcal D_g=\prod_{\substack{4\le k\le g\\k\ {\rm even}}}\partial_{w_k}.
\end{equation}
The product is empty in genera two and three. Since $\sigma_\sharp=\partial_{w_2}\mathcal D_g\sigma$, the scalar inversion theorem on the one-point stratum gives
\begin{proposition}[One-point sigma coordinate recovery]\label{prop:inversion}
For $g\ge2$ and $w(z)$ on the one-point Abel image of a finite curve point,
\begin{equation}\label{eq:inversion-general}
\rho(z)=-\frac{\partial_{w_1}\mathcal D_g\sigma(w(z))}
{\partial_{w_2}\mathcal D_g\sigma(w(z))}.
\end{equation}
\end{proposition}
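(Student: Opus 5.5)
Our plan is to deduce \eqref{eq:inversion-general} from Jacobi inversion on $W_{g-1}$, restricted to the one-point stratum. The route goes through the vanishing of low-order sigma derivatives on the strata $W_n$ and through Matsutani--Previato type identities for the lowest nonvanishing derivative. We take the one-point stratum $W_1$, with its Abel vector $w=w(P)$ for $P=(\rho,y)$, and we consider the divisor $D=P+(g-2)\infty$. For $g\ge3$ we also use the nearby divisor $P+P_1+\cdots+P_{g-2}$ with $P_i\to\infty$.

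The starting point is the standard relation, due to Jorgensen for $g=2$ and to \^Onishi and Matsutani--Previato in general, at a point $u$ of $W_{g-1}$ representing $P_1+\cdots+P_{g-1}$:
\[
\frac{\sigma_{i}(u)}{\sigma_{g}(u)}\ \text{equals, up to sign, the $i$-th coefficient of } \prod_{k=1}^{g-1}(x-x(P_k)).
\]
This comes from the fact that the gradient of $\sigma$ on its zero divisor is proportional to the vector of coefficients of the canonical differential vanishing on the divisor. Concretely, $\sum_i \sigma_i(u)\,\omega_i$ vanishes at $P_1,\dots,P_{g-1}$. The reason is that $t\mapsto\sigma(u+w(Q_t))$ vanishes to first order along directions tangent to $W_{g-1}$, and $\partial_t w(Q_t)=(\rho^{k-1})_k\,d\rho/y$.

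\textbf{Genus two.} Here $W_1=W_{g-1}$ and $\mathcal D_2=1$. The differential $(\sigma_1+\sigma_2\rho')\,d\rho/y$ must vanish at the point $P$ itself, which gives $\sigma_1+\rho\,\sigma_2=0$, i.e.\ \eqref{eq:inversion-general}. We must also check that $\sigma_2\ne0$ at finite $P$. This follows from the leading term \eqref{eq:sigma-leading-g2} and the local expansion $w_2\sim$ local parameter near infinity, together with smoothness of $\Theta$ on $W_1$ for a smooth hyperelliptic curve.

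\textbf{General genus.} For general $g$, on $W_{g-1}$ with $g-2$ points at infinity, the gradient $\sigma_i$ vanishes identically on $W_1$ when $g\ge3$ (for example $\sigma_3=0$ in genus three). We therefore must differentiate. The plan is an induction on the stratum, in the style of Matsutani--Previato. We parametrize the points near infinity by local parameters $t_i$ with $x=t_i^{-2}$. We expand the identity $\sum_i\sigma_i(u)\rho^{i-1}=0$ at $P$ in the $t_i$, and we use that $w(P_i)$ has expansion $w_g\sim -t_i$, $w_{g-1}\sim -t_i^3/3,\dots$ by the gap sequence $1,3,\dots,2g-1$. We then extract the lowest-order nonvanishing coefficient in the symmetric functions of the $t_i$. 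Each point sent to infinity contributes a derivative $\partial_{w_g}$, then $\partial_{w_{g-2}}$, and so on, because the odd-weight leading directions pair up. The surviving leading equation should read
\[
\partial_{w_1}\mathcal D_g\sigma+\rho\,\partial_{w_2}\mathcal D_g\sigma=0,
\]
with all terms involving higher powers $\rho^{k-1}$, $k\ge3$, killed by the vanishing of the corresponding derivatives on $W_1$. The weight-counting argument, checked against the Schur--Weierstrass leading terms \eqref{eq:sigma-leading-g3}, is what identifies exactly the even indices $4\le k\le g$ in $\mathcal D_g$.

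\textbf{Main obstacle.} The hardest step is this limiting expansion. We need to show that the lowest-order term is precisely $\mathcal D_g$ applied to the genus-two-like relation, and that the denominator $\partial_{w_2}\mathcal D_g\sigma$ is nonzero on $W_1$ away from infinity. We would try first to invoke Matsutani--Previato's sigma-derivative results on the strata $W_n$ for odd-degree hyperelliptic curves directly. The denominator nonvanishing would then follow by comparing with the leading Schur polynomial term near infinity, plus analyticity and smoothness of the curve. The relation then extends by analytic continuation along the lifted trajectory.
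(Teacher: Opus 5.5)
\textbf{There is a genuine gap for $g\ge4$, and the genus-three step is misdescribed.}

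\textbf{Genera two and three.} Your genus-two argument is correct; it is the same remark the paper makes after the proposition. Your claim that the gradient of $\sigma$ vanishes identically on $W_1$ for $g\ge3$ is false in genus three. With $D=P+\infty$ and $P$ finite, $h^0(D)=1$, so $\Theta$ is smooth there. Your own gradient identity, together with the fact that $\omega_k$ vanishes to order $2(g-k)$ at $\infty$, gives $(\sigma_1,\sigma_2,\sigma_3)\propto(-\rho,1,0)$. That is already the proposition, with $\sigma_2\ne0$. Differentiation is only needed for $g\ge4$, where $h^0(P+(g-2)\infty)=\lfloor g/2\rfloor\ge2$ and all first derivatives vanish.

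\textbf{Genus $g\ge4$.} Here you only assert that the leading equation ``should read'' $\partial_{w_1}\mathcal D_g\sigma+\rho\,\partial_{w_2}\mathcal D_g\sigma=0$. Your heuristic for it contradicts that outcome. Contributions $\partial_{w_g}$, then $\partial_{w_{g-2}}$, and so on, give odd indices for odd $g$, for example $\partial_{w_5}$ when $g=5$, whereas $\mathcal D_5=\partial_{w_4}$. The order is also wrong: the required order is $\lfloor g/2\rfloor=h^0(P+(g-2)\infty)$, not one derivative per point at infinity. For $g=5$ the lowest-order part of $\sigma$ at $w(P)$ does not involve the $w_5$ direction at all. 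So the dominant displacement $w_5\approx-\sum t_i$ from the points near infinity drops out, and the expansion must be carried to higher order in the $t_i$. That is exactly the step you leave undone.

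\textbf{Denominator.} You also do not establish that $\partial_{w_2}\mathcal D_g\sigma\ne0$ at finite points. A comparison with the Schur--Weierstrass term near $\infty$, where $\sigma_\sharp$ in fact vanishes, plus analyticity, says nothing about zeros at finite points.

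\textbf{How the paper closes this.} The paper fills both gaps by citation. It uses the $k=1$ stratum inversion theorem of \cite[Definition~5.1 and Theorem~5.5]{primeform}. Its index set $\natural_1=\{2,4,\ldots\}\cap\{1,\ldots,g\}$, with $2$ replaced by $1$, gives exactly $\partial_{w_2}\mathcal D_g$ and $\partial_{w_1}\mathcal D_g$. Nonvanishing of $\sigma_\sharp$ comes from \cite[Proposition~6.6(1)]{onishi}. Your fallback of ``invoking Matsutani--Previato directly'' amounts to this, but you would have to identify the precise statement and check its indices.

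\textbf{A self-contained alternative.} You can replace the limiting expansion with the Riemann--Kempf singularity theorem at $w(P)$. Write $\rho_P$ for the coordinate of $P$, $D=P+(g-2)\infty$ and $r=\lfloor g/2\rfloor$. Then $H^0(D)$ is spanned by $1,\rho,\ldots,\rho^{r-1}$, and $H^0(K-D)$ by $(\rho-\rho_P)\rho^j\,d\rho/y$ for $0\le j\le r-1$. Their products are the linear forms $v_{i+j+2}-\rho_P v_{i+j+1}$ in the coordinates dual to the $\omega_k$. The Gaussian prefactor of $\sigma$ does not affect the lowest-order term, so
\[
\sigma(w(P)+v)=c\det\bigl(v_{i+j+2}-\rho_P\,v_{i+j+1}\bigr)_{i,j=0}^{r-1}+O(|v|^{r+1}),\qquad c\ne0.
\]
Here $v_{2r}$ occurs only in the last diagonal entry. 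So $\partial_{v_{2r}},\partial_{v_{2r-2}},\ldots,\partial_{v_4}$ successively strip off corners and leave $c\,(v_2-\rho_P v_1)$. This gives both the ratio $-\rho_P$ and $\partial_{w_2}\mathcal D_g\sigma(w(P))=c\ne0$ in every genus.
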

\begin{proof}
In the notation of \cite[Definition~5.1 and Theorem~5.5]{primeform}, the one-point derivative index is $\natural_1=\{2,4,6,\ldots\}\cap\{1,\ldots,g\}$. Replacing its index $2$ by $1$ gives $\natural_1^{(1)}$. The $k=1$ stratum inversion theorem states that the ratio of these derivatives is minus the coordinate of the single finite point. These two derivatives are precisely $\partial_{w_1}\mathcal D_g\sigma$ and $\partial_{w_2}\mathcal D_g\sigma$. The denominator is $\sigma_\sharp$, which is nonzero at every finite point of this image by \cite[Proposition~6.6(1)]{onishi}. An overall multiplicative normalization of sigma cancels from this ratio. Replacing the monic ordinate by $y/2$ gives the same Abel differential basis after the translation to our curve convention.
\end{proof}

The formula represents the scalar coordinate analytically on its constrained Abel image. Its origin can also be seen directly in the two-point addition identity \cite[Lemma~7.1]{onishi}. If $w=w(z)$ and $w_\infty(t)$ is the Abel vector of a point with $\rho=t^{-2}$ tending to infinity, then
\begin{equation}\label{eq:inversion-infinity-limit}
\begin{aligned}
&(-1)^{g+1}\frac{\sigma_\flat(w+w_\infty(t))\sigma_\flat(w-w_\infty(t))}
{\sigma_\sharp(w)^2\sigma_\sharp(w_\infty(t))^2}
=t^{-2}-\rho(z),\\
&w_{\infty,k}(t)=-\frac{t^{2g-2k+1}}{2g-2k+1}
+O(t^{2g-2k+3}).
\end{aligned}
\end{equation}
Subtracting the known $t^{-2}$ pole and extracting the finite part recovers $-\rho(z)$. Expanding the sigma functions makes this a finite-derivative procedure; the stratified inversion theorem identifies the shorter ratio \eqref{eq:inversion-general}. Sending only one auxiliary point to infinity in the two-point identity gives this finite-part construction, while the full Jacobi-divisor restriction sends $g-1$ points to infinity.

\subsection{Explicit coordinate ratios in genera two, three and four}

The general ratio becomes $-\sigma_1/\sigma_2$ for both genera two and three, and $-\sigma_{14}/\sigma_{24}$ for genus four. The following equivalent low-genus representations are useful in the conventions of Enolski and collaborators \cite{enolski,thirdkind}:
\begin{align}
\rho(z)&=-\frac{\sigma_1(w(z))}{\sigma_2(w(z))},&&g=2,\label{eq:inversion-g2}\\
\rho(z)&=-\frac{\sigma_{13}(w(z))}{\sigma_{23}(w(z))},&&g=3,\label{eq:inversion-g3}\\
\rho(z)&=-\frac{\sigma_{144}(w(z))}{\sigma_{244}(w(z))},&&g=4.\label{eq:inversion-g4}
\end{align}
All are evaluated on the one-point image, with nonzero displayed denominators. In genus two it is characterized locally by $\sigma=0$; in genus three the relevant branch satisfies $\sigma=\sigma_3=0$; in genus four it satisfies $\sigma=\sigma_4=\sigma_{44}=0$. The genus-four formula is quoted directly from \cite[Eq.~VII.18]{thirdkind}. These higher-derivative formulas are alternative representations on the same stratum, rather than the definition of the arbitrary-genus ratio.

For genus two, differentiating $\sigma(w(z))=0$ gives $\sigma_1+\rho\sigma_2=0$. In genus three $\sigma_3=0$ removes the last term when differentiating $\sigma=0$, giving the same first-derivative ratio; differentiating $\sigma_3=0$ and using $\sigma_{33}=0$ gives the displayed alternative \eqref{eq:inversion-g3}. Such identities explain why different derivative orders can represent the same coordinate.

\subsection{Evaluation at prescribed propagation}\label{sec:scalar-evaluation}

The sigma ratios above give exact coordinate identities on $W_1$. For $g\ge2$, however, only $w_1$ is linear in propagation. Computing $\rho(z)$ directly from these formulas requires determining the remaining Abel coordinates while enforcing the one-point-image constraints and compatible path lifts. This is a constrained inversion problem in addition to the evaluation of sigma functions.

Our numerical implementation uses the scalar quadrature
\begin{equation}\label{eq:scalar-inverse-integral}
z-z_0=\int_{(\rho(z_0),\rho'(z_0))}^{(\rho(z),\rho'(z))}\frac{d\rho}{y}.
\end{equation}
The integration path and sheet are continued from the initial curve point. We invert this relation numerically to obtain the moving point, calculate its Abel lift using~\eqref{eq:abel-map}, and evaluate the individual-mode sigma formulas with Generapy~\cite{generapy}. We also evaluate the scalar sigma ratios at the computed Abel vectors as consistency checks. These checks verify the coordinate identities on the recovered image; they do not provide an independent sigma-based inversion at prescribed propagation. Thus the formulas are analytic, while the reported evaluations use numerical quadrature, inversion and special-function computation.

\section{Individual mode solutions}\label{sec:modes}

Once the scalar trajectory is known, the remaining task is to recover the individual modes rather than just their products. We integrate the fixed-point contribution by sigma functions, determine its exponential correction, and then combine it with the polynomial correction to obtain the full canonical solution.

\subsection{Fixed mode points and the mode differential}

The fixed mode points were defined in \eqref{eq:mode-points}. The numerical trajectory need not visit these fixed points. The fixed lift $w(\mu_j)$ is calculated directly from the curve point, using paths compatible with \eqref{eq:abel-map}.

The mode differential is
\begin{equation}\label{eq:third-kind-diff}
\omega_j^{\rm mode}
=\frac{\rho'(\mu_j)-y}{2[\rho(\mu_j)-\rho]}\frac{d\rho}{y}.
\end{equation}
It has residue $+1$ at $(\rho(\mu_j),-\rho'(\mu_j))$ and residue $-1$ at infinity. Along the moving point it is exactly the rational term of $u_j'/u_j$ in \eqref{eq:split-log} times $dz$. The following subsections integrate this differential through the sharp/flat sigma factors and combine it with the polynomial correction $B_j$.

\subsection{The fixed-point second-kind coefficients}

In arbitrary genus, define
\begin{equation}\label{eq:h-general}
h_{jk}=-\Reg\int_\infty^{(\rho(\mu_j),\rho'(\mu_j))}dr_k.
\end{equation}
Here $\Reg$ subtracts the divergent powers of the local coordinate $t=\rho^{-1/2}$ at infinity and chooses zero constant term in the primitive there. The sheet and the lifted path are the same as for $w(\mu_j)$. These choices fix the integration constants and signs; $h_{jk}$ depend on the fixed point and not on propagation.

In genera two and three, put $\zeta_{\sharp,k}=\partial_{w_k}\log\sigma_\sharp$. The coefficients have the short forms
\begin{equation}\label{eq:h-formulas}
\begin{aligned}
h_{j2}&=\tfrac12\zeta_{\sharp,2}(w(\mu_j)),\\
h_{j1}&=\zeta_{\sharp,1}(w(\mu_j))
-\frac{\sigma_1(w(\mu_j))}{\sigma_2(w(\mu_j))}h_{j2},\\
h_{j3}&=\zeta_{\sharp,3}(w(\mu_j))\qquad (g=3).
\end{aligned}
\end{equation}
In genus two $\sigma_1/\sigma_2=-\rho(\mu_j)$, so $h_{j1}=\zeta_{\sharp,1}+\rho(\mu_j)h_{j2}$. The genus-three expressions are the negatives of the second-kind primitives in \cite[Proposition~III.2, Eq.~III.10]{thirdkind}; the finite-base constants there are replaced by the zero finite part at infinity in \eqref{eq:h-general}. The same infinity prescription gives the first two expressions in genus two. These short expressions are convenient low-genus evaluations; the regularized-integral definition applies in every genus, as does the finite sigma-based coefficient extraction described after Lemma~\ref{lem:kernel}.

\subsection{The sigma logarithmic differential}

\begin{lemma}\label{lem:kernel}
With the first- and second-kind bases and sigma normalization above, define on compatible regular lifts the uncorrected point factors (denoted by a bar)
\[
\bar u_j(z)=\frac{\sigma_\flat(w(z)+w(\mu_j))}
{\sigma_\sharp(w(z))\sigma_\sharp(w(\mu_j))},\qquad
\bar v_j(z)=\frac{\sigma_\flat(w(z)-w(\mu_j))}
{\sigma_\sharp(w(z))\sigma_\sharp(w(\mu_j))}.
\]
Then
\begin{equation}\label{eq:kernel-log}
\frac{d}{dz}\log\bar u_j(z)
=\frac{\rho'(\mu_j)-\rho'(z)}{2[\rho(\mu_j)-\rho(z)]}
+\sum_{k=1}^g h_{jk}\rho(z)^{k-1}.
\end{equation}
For $\bar v_j$ replace $\rho'(\mu_j)$ and $h_{jk}$ by their negatives. Moreover,
\begin{equation}\label{eq:kernel-product}
\bar u_j(z)\bar v_j(z)=(-1)^{g+1}[\rho(\mu_j)-\rho(z)].
\end{equation}
\end{lemma}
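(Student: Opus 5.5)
The plan is to prove the product formula \eqref{eq:kernel-product} first and then obtain \eqref{eq:kernel-log} from a third-kind integral argument. For the product, apply \^Onishi's two-point addition identity \cite[Lemma~7.1]{onishi} in the form already used in \eqref{eq:inversion-infinity-limit}, with the auxiliary point taken to be the fixed mode point rather than a point near infinity:
\[
(-1)^{g+1}\frac{\sigma_\flat(w(z)+w(\mu_j))\,\sigma_\flat(w(z)-w(\mu_j))}{\sigma_\sharp(w(z))^2\,\sigma_\sharp(w(\mu_j))^2}=\rho(\mu_j)-\rho(z).
\]
This is exactly $(-1)^{g+1}\bar u_j\bar v_j=\rho(\mu_j)-\rho(z)$, which gives \eqref{eq:kernel-product}. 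The only care needed is to check that compatible lifts keep the sign unchanged under the quasi-periodicity of $\sigma_\flat$ and $\sigma_\sharp$. Since $\bar u_j\bar v_j$ involves $w(\mu_j)$ with both signs, the quasi-periodic exponential factors cancel in the product.

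For the logarithmic derivative, I would regard $\log\bar u_j$ as a function of the moving point $P=(\rho,y)$ on the curve, with $\mu_j$ fixed. Its differential is
\[
d\log\bar u_j=\sum_k\bigl[\partial_k\log\sigma_\flat(w(P)+w(\mu_j))-\partial_k\log\sigma_\sharp(w(P))\bigr]\,\omega_k(P).
\]
The key steps, in order, are as follows.
\begin{enumerate}
\item Determine the divisor of $\bar u_j$ in $P$. By the Riemann vanishing theory on $W_{g-1}$ (the same facts used in Proposition~\ref{prop:inversion}), $\sigma_\flat(w(P)+w(\mu_j))$ vanishes on the $W_1$ stratum exactly when $P$ is the hyperelliptic conjugate $\bar\mu_j=(\rho(\mu_j),-\rho'(\mu_j))$. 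By \cite[Proposition~6.6(1)]{onishi}, $\sigma_\sharp(w(P))$ has no finite zeros.
\item Conclude that $d\log\bar u_j$ is a meromorphic differential with residue $+1$ at $\bar\mu_j$, residue $-1$ at infinity, and possibly higher-order poles at infinity. This matches the pole structure of $\omega_j^{\rm mode}$ in \eqref{eq:third-kind-diff}.
\item Write $d\log\bar u_j-\omega_j^{\rm mode}$ as a combination of holomorphic differentials $\omega_k$ plus second-kind differentials at infinity. Use the Legendre-type relation built into \eqref{eq:sigma-definition}, together with the quasi-periodicity of $\sigma$, to show that the second-kind part vanishes and the $a$-periods are compatible.
\end{enumerate}

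The remainder is then $\sum_k c_k\,\omega_k$. Along the trajectory $\omega_k=\rho^{k-1}dz$, so this produces exactly the polynomial term in \eqref{eq:kernel-log}. To identify $c_k=h_{jk}$, I would use the standard third-kind representation \cite{thirdkind}: the $c_k$ equal the negative of $\int_\infty^{\mu_j}dr_k$, once the constant is fixed by comparing Laurent expansions at $t=\rho^{-1/2}\to0$. Concretely, I would expand both sides in $t$ using the leading Schur--Weierstrass terms \eqref{eq:sigma-leading-g2}--\eqref{eq:sigma-leading-g3} and the expansion of $w_\infty(t)$. The zero-constant-term choice in $\Reg$ then matches the absence of extra constants in the sigma normalization.

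The $\bar v_j$ statement follows by replacing $w(\mu_j)$ with $-w(\mu_j)$. This is the Abel image of the conjugate point, so it flips the signs of $\rho'(\mu_j)$ and of the $h_{jk}$. As a check, adding the two logarithmic derivatives should give $-\rho'/(\rho(\mu_j)-\rho)$, which is the derivative of $\log(\rho(\mu_j)-\rho)$. This is consistent with \eqref{eq:kernel-product}.

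The main obstacle is the identification $c_k=h_{jk}$, and in particular the normalization at infinity in general genus. The regularized integral must be matched with the behaviour of $\sigma_\flat/\sigma_\sharp$ as $P\to\infty$, where the numerator and denominator vanish to the orders set by the gap sequence. I would first try to obtain this from \eqref{eq:inversion-infinity-limit} with $w_\infty(t)$ in place of $w(z)$. Taking the $t$-derivative of the logarithm of the two-point identity gives a finite-part extraction that should reproduce $\Reg\int dr_k$ term by term.
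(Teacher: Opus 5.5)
\textbf{The gap is the identification $c_k=h_{jk}$.} Your derivation of \eqref{eq:kernel-product} from \^Onishi's two-point identity agrees with the paper. So does the residue comparison showing that $d_z\log\bar u_j-\omega_j^{\rm mode}=\sum_k c_k(\mu_j)\,\omega_k$ is holomorphic in the moving point.

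Your Step~3 is unnecessary. Because $\bar u_j$ is meromorphic in the local parameter at infinity, $d\log\bar u_j$ has only simple poles. It is single-valued because $\sigma_\flat(w+w(\mu_j))$ and $\sigma_\sharp(w)$ acquire multipliers with equal $w$-gradients under a common period shift. So no second-kind part ever arises, and no Legendre relation is needed.

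The identification $c_k=h_{jk}$ is the real content of \eqref{eq:kernel-log}, and neither of your routes reaches it:
\begin{itemize}
\item Quoting the third-kind formula of \cite{thirdkind} does not settle it. That formula is written with ordinary sigma and finite-base shifts, not for the infinity-based sharp/flat kernel, so translating it is the actual work.
\item Your proposed route through \eqref{eq:inversion-infinity-limit} cannot work. The two-point identity is unchanged under $w(\mu_j)\mapsto-w(\mu_j)$, which interchanges $\bar u_j$ and $\bar v_j$. The $\bar v_j$ kernel is the $\bar u_j$ kernel at $\iota\mu_j$ up to a $z$-independent factor. Hence every logarithmic derivative of that identity, in either slot and with either point sent to infinity, constrains only the even combination $c_k(P)+c_k(\iota P)$. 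It shows that combination vanishes, but says nothing about $c_k$ itself, which is odd.
\item Expanding $d_z\log\bar u_j$ with the moving point near infinity merely re-expresses $c_k$ through sigma derivatives at $w(\mu_j)$, as in \eqref{eq:h-formulas}. Matching those expressions with $-\Reg\int_\infty^{\mu_j}dr_k$ is exactly what remains to be proved.
\end{itemize}

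\textbf{The missing idea is to differentiate in the fixed point.}
\begin{enumerate}
\item Treat $c_k=c_k(P)$ as a function of the fixed curve point $P$ and compute $d_Pd_z\log\bar u_j$. Through the prime-form representation \cite[Theorem~7.17]{primeform}, this is the algebraic fundamental bidifferential evaluated at $(z,\iota P)$.
\item By \cite[Eq.~II.11]{thirdkind}, that bidifferential is the derivative of the rational Cauchy kernel plus $dr^T\omega$. Here the Kleinian normalization $\kappa=\eta\omega^{-1}$ in \eqref{eq:sigma-definition} genuinely enters: it makes the mixed second derivative of $\log\sigma$ reproduce this algebraic bidifferential built from the chosen $dr_k$, with no additional symmetric holomorphic term.
\item Evaluation at $\iota P$ flips $dr$ to $-dr$. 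Subtracting $d_P\omega_j^{\rm mode}$ then gives $d_Pc_k(P)=-dr_k(P)$, so $c_k(P)=-\Reg\int_\infty^P dr_k+C_k$.
\item Only now does your product identity do real work. It gives the antisymmetry $c_k(\iota P)=-c_k(P)$. At infinity each $dr_k$ has only even powers of $t$ multiplying $dt$, so the zero-constant-term primitive is odd under $t\mapsto-t$.
\end{enumerate}
Together these force $C_k=-C_k=0$, hence $c_k=h_{jk}$ in every genus, without any explicit sigma expansion at infinity.
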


\begin{proof}
We describe the normalization translation behind the third-kind formula. On the lifted curve, the sharp/flat kernel is the ratio of the algebraic prime form between the moving point and the sheet involution of the fixed point to its infinity factors. This is the hyperelliptic prime-form representation of \cite[Theorem~7.17]{primeform}. To translate its monic curve convention to ours, use the ordinate $y/2$; the holomorphic basis $d\rho/(2(y/2))$ is then \eqref{eq:holomorphic-diffs}. The second-kind basis is fixed by \eqref{eq:second-kind-diffs}, rather than by an independent holomorphic shift.

The logarithmic kernel differential and \eqref{eq:third-kind-diff} have the same simple poles and residues; their difference is a holomorphic differential in the moving point. The sigma quasi-period multipliers have logarithmic derivatives given by periods of $dr_k$, so compatible continuation preserves this difference in the moving lift. Temporarily write the difference as $\sum_{k=1}^{g} K_k(P)\omega_k$, where $P=\mu_j$ is the fixed curve point.

To identify its coefficients, vary the fixed curve point, holding the moving point fixed. The mixed logarithmic derivative of the algebraic prime form is the algebraic fundamental bidifferential. In \cite[Eq.~II.11]{thirdkind}, that bidifferential is the derivative of the rational Cauchy differential plus $dr^T\omega$. Evaluating the fixed argument on its opposite sheet changes $dr$ to $-dr$. Subtracting the derivative of the rational mode differential therefore gives
\[
d_P\bigl(d_z\log\bar u_j(z)-\omega_j^{\rm mode}\bigr)
=-\sum_{k=1}^{g} dr_k(P)\,\omega_k(z),
\qquad d_P K_k(P)=-dr_k(P).
\]
Here $d_P$ denotes variation along the fixed-point copy of the curve, not a propagation derivative. Thus $K_k(P)=-R_k(P)+C_k$, where $R_k=\Reg\int_\infty^P dr_k$ and $C_k$ is constant in $P$.

To fix $C_k$, use \^Onishi's two-point addition formula \cite[Lemma~7.1]{onishi}, which gives \eqref{eq:kernel-product} independently of the logarithmic formula. Its prefactor is $(-1)^{g+1}$ in our normalization. Let $\iota P$ have the opposite ordinate and the negative Abel lift. The corresponding kernel is $\bar v_j$ up to a factor independent of the moving point. The logarithmic derivative of the product is $-\rho'/(\rho(\mu_j)-\rho)$, exactly the sum of the two rational mode terms. Hence $K_k(\iota P)=-K_k(P)$. Near infinity, $\iota$ sends $t=\rho^{-1/2}$ to $-t$; each $dr_k$ has only even powers of $t$ multiplying $dt$. Its zero-constant-term primitive is therefore odd, so $R_k(\iota P)=-R_k(P)$. Substitution into $K_k=-R_k+C_k$ gives $C_k=-C_k$, hence $C_k=0$. Compatible continuation preserves this identity on the lifted curve. Consequently $K_k(P)=h_{jk}$ as defined in~\eqref{eq:h-general}. Substituting $\omega_k=\rho^{k-1}dz$ proves \eqref{eq:kernel-log}, and sheet reversal gives the formula for $\bar v_j$. This is the infinity-based third-kind integration formula; Eq.~III.17 of \cite{thirdkind} instead uses ordinary sigma with finite-base shifts.

These identities also hold at a finite branch fixed point $(a,0)$.
By \cite[Proposition~6.6(1)]{onishi}, $\sigma_\sharp$ has no zeros on
the one-point Abel image except at infinity. In a local curve coordinate
$t$ with $\rho-a=t^2$, write $y=t s(t^2)$, where
$s(\xi)=\sqrt{F(a+\xi)/\xi}$ is a locally chosen holomorphic square
root with $s(0)^2=F'(a)\ne0$. Both kinds of Abelian differential
are therefore holomorphic. Hence the fixed Abel lift, the
coefficients~\eqref{eq:h-general}, and the sigma quotients extend
holomorphically as the fixed point approaches $(a,0)$. Away from the
moving zeros, \eqref{eq:kernel-product} ensures nonzero factors, so their
logarithmic derivatives extend as well. Passing to the limit proves
\eqref{eq:kernel-log}--\eqref{eq:kernel-product} without changing the
shifted sigma ratios or their exponential factors.
\end{proof}

Alternatively, in any genus, the coefficients $h_{jk}$ can be recovered directly from the sigma kernel. Subtracting the rational term in \eqref{eq:kernel-log} from its left-hand side leaves a polynomial in $\rho$ of degree at most $g-1$. Its coefficients are determined by evaluations at $g$ regular curve points with distinct $\rho$-coordinates, or by its first $g$ Taylor coefficients in $\rho$ at one regular nonbranch point. The points must avoid poles and zero sigma denominators, and all evaluations use compatible Abel lifts. This is coefficient extraction from an exact polynomial identity, rather than a fit to the mode trajectory; the low-genus formulas above give shorter evaluations of the same constants.

\subsection{Corrected point factors as regularized restrictions of Baker--Akhiezer functions}\label{sec:corrected-point-factors}

Put $\Delta w_k(z)=w_k(z)-w_k(z_0)$. The corrected point factors are
\begin{equation}\label{eq:point-factor-sigma}
\begin{aligned}
\widehat u_j(z)&=\sqrt{\beta_j}\,\bar u_j(z)
\exp\!\left[-\sum_{k=1}^g h_{jk}\Delta w_k(z)\right],\\
\widehat v_j(z)&=(-1)^{g+1}\sqrt{\beta_j}\,\bar v_j(z)
\exp\!\left[\sum_{k=1}^g h_{jk}\Delta w_k(z)\right].
\end{aligned}
\end{equation}
The corrected point factors $\widehat u_j,\widehat v_j$
in~\eqref{eq:point-factor-sigma} arise as regularized restrictions of
Baker--Akhiezer functions to the one-point Abel image. Matsutani's sigma
representation~\cite[Section~4.2.2.3, Eq.~(4.42)]{matsutani-book} uses the
same $\sigma_\sharp$ normalization, while \^Onishi's stratum
limits~\cite[Proposition~6.6]{onishi} yield the
$\sigma_\flat/\sigma_\sharp$ quotient appearing here. When $B_j=0$,
these factors, normalized by their initial values, give the complete modes.

Lemma~\ref{lem:kernel} gives
\begin{equation}\label{eq:point-factors}
\begin{aligned}
\frac{\widehat u_j'}{\widehat u_j}
&=\frac{\rho'(\mu_j)-\rho'(z)}{2[\rho(\mu_j)-\rho(z)]},\\
\frac{\widehat v_j'}{\widehat v_j}
&=\frac{-\rho'(\mu_j)-\rho'(z)}{2[\rho(\mu_j)-\rho(z)]},
\end{aligned}
\end{equation}
and
\begin{equation}\label{eq:point-product}
\widehat u_j(z)\widehat v_j(z)=\beta_j[\rho(\mu_j)-\rho(z)]=I_j(z).
\end{equation}
Consequently, with continued logarithms,
\begin{equation}\label{eq:third-kind-integral}
\int_{(\rho(z_0),\rho'(z_0))}^{(\rho(z),\rho'(z))}\omega_j^{\rm mode}
=\log\frac{\widehat u_j(z)}{\widehat u_j(z_0)}.
\end{equation}
\subsection{The mode-solution theorem}\label{sec:mode-solutions}

The polynomial correction in each mode is integrated by the unique decomposition
\begin{equation}\label{eq:B-reduction}
\frac{B_j(\rho)}y\,d\rho=d\bigl(P_j(\rho)y\bigr)
+\sum_{k=1}^g c_{jk}\omega_k+\sum_{k=1}^g d_{jk}\,dr_k,
\end{equation}
where $P_j$ is polynomial and the coefficients are constant on the chosen power line. To construct it, observe that
\begin{equation}\label{eq:exact-polynomial-differential}
d(\rho^\ell y)=\left(\ell\rho^{\ell-1}F+\tfrac12\rho^\ell F_\rho\right)\frac{d\rho}{y},\qquad \ell\ge0.
\end{equation}
Its numerator has degree $\ell+2g$ and leading coefficient $4\ell+4g+2$. If the remaining numerator has leading term $a\rho^h$, $h\ge2g$, take $\ell=h-2g$ and subtract $a/(4\ell+4g+2)$ times this exact differential. The leading term cancels, lowering the degree. Repetition leaves degree at most $2g-1$. The numerators of $dr_k$ in \eqref{eq:second-kind-diffs} have successive degrees $2g-k$ and leading coefficients $2g+1-2k$, $k=1,\ldots,g$; subtracting their multiples cancels degrees $2g-1,\ldots,g$. The remainder is a linear combination of $\omega_1,\ldots,\omega_g$. These distinct leading degrees also prove uniqueness.

Along the scalar trajectory, \eqref{eq:exact-polynomial-differential} is the identity
$d(\rho^\ell\rho')/dz=\ell\rho^{\ell-1}F+\rho^\ell F_\rho/2$.
Thus the relation $(\rho')^2=F(\rho)$ expresses higher powers of $\rho$ as derivatives of $\rho^\ell\rho'$ plus lower powers; integration records the derivative terms explicitly and reduces the remainder to a finite differential basis.

Let $R_k$ be a primitive of $dr_k$, continued along the moving path, and write $\Delta R_k(z)=R_k(z)-R_k(z_0)$. Since $\rho'=y$, integration gives
\begin{equation}\label{eq:B-integral}
\mathcal C_j(z):=\int_{z_0}^z B_j(\rho(s))\,ds
=\bigl[P_j(\rho)\rho'\bigr]_{z_0}^z
+\sum_{k=1}^g c_{jk}\Delta w_k(z)+\sum_{k=1}^g d_{jk}\Delta R_k(z).
\end{equation}
The increments use compatible lifts and may be evaluated directly from the finite initial point. Polynomial differentials $B_j\,d\rho/y$ have only a possible pole at infinity and zero residue there. Their primitives therefore consist of the algebraic, first-kind and second-kind terms displayed above.

When $\deg B_j\le g-1$, uniqueness gives $P_j=0$ and $d_{jk}=0$ for every $k=1,\ldots,g$, so
\begin{equation}\label{eq:degree-condition}
B_j(\rho)=\sum_{k=1}^g c_{jk}\rho^{k-1},\qquad
\mathcal C_j(z)=\sum_{k=1}^g c_{jk}\Delta w_k(z).
\end{equation}
Only first-kind differentials are then needed. This occurs automatically for $\deg Q\le g$ and can also occur through cancellation for higher-degree $Q$.

\begin{theorem}[Local hyperelliptic mode solutions]\label{thm:mode-solutions}
Let $\beta_j$ be positive integers, $\sum_{j=1}^{N}\gamma_j=0$, and $Q$ a polynomial in independent powers. Define $q,r,F$ by \eqref{eq:qr-defs}--\eqref{eq:scalar}. Assume that $F$ is squarefree of degree $2g+1$, $g\ge2$, and that the fixed coordinates $\gamma_j/\beta_j$ are distinct. The displayed formulas use leading coefficient $4$; for a general nonzero leading coefficient, use the prescription~\eqref{eq:general-leading-normalization} below. Let $[\rho'(z)]^2=F(\rho(z))$, $\rho''(z)=F_\rho(\rho(z))/2$ define a nonconstant scalar trajectory on a simply connected local chart with $I_j\ne0$ and nonvanishing displayed sigma normalizers and point factors. Use compatible infinity-based lifts and \eqref{eq:h-general}.

Choose initial modes with
\[
u_j(z_0)v_j(z_0)=\gamma_j-\beta_j\rho(z_0),\qquad
\prod_{j=1}^{N} u_j(z_0)^{\beta_j}=\frac{q(\rho(z_0))+\rho'(z_0)}2.
\]
Then the local canonical solutions are
\begin{equation}\label{eq:mode-solutions}
\begin{aligned}
u_j(z)&=u_j(z_0)\frac{\widehat u_j(z)}{\widehat u_j(z_0)}
\exp\!\left[\mathcal C_j(z)\right],\\
v_j(z)&=v_j(z_0)\frac{\widehat v_j(z)}{\widehat v_j(z_0)}
\exp\!\left[-\mathcal C_j(z)\right].
\end{aligned}
\end{equation}
They satisfy $u_jv_j=I_j$, the product identities \eqref{eq:products}, $H=0$, and the weighted invariants \eqref{eq:weighted-invariants}.
\end{theorem}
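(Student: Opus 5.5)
The proof is a verification. The proposed $u_j,v_j$ should satisfy the separated logarithmic equations \eqref{eq:split-log} with the correct initial data, and the product constraints should then turn these back into Hamilton's equations \eqref{eq:canonical}. I will assume leading coefficient four; the general case follows by the rescaling $\tau=sz$, $Y=\rho'/s$ described earlier.

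\textbf{Step 1: the logarithmic derivatives.} By \eqref{eq:point-factors}, $\widehat u_j'/\widehat u_j$ equals the rational fixed-point term of \eqref{eq:split-log}. By \eqref{eq:B-integral}, $\mathcal C_j'=B_j(\rho)$; this holds because the decomposition \eqref{eq:B-reduction} is an identity of differentials, integrated along compatible lifts with $\rho'=y$. Hence $u_j'/u_j$ agrees with the first line of \eqref{eq:split-log}, and similarly for $v_j$ with $-\mathcal C_j$.

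\textbf{Step 2: the power line.} From \eqref{eq:point-product}, $\widehat u_j\widehat v_j=\beta_j[\gamma_j/\beta_j-\rho]=I_j(\rho)$ along the trajectory. Together with the initial normalization this gives $u_j(z)v_j(z)=u_j(z_0)v_j(z_0)\,I_j(z)/I_j(z_0)=\gamma_j-\beta_j\rho(z)$. This identity yields the weighted invariants and $\sum\beta_j\rho=-\sum I_j$.

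\textbf{Step 3: the product $U$.} I form $U=\prod u_j^{\beta_j}$ and compare it with $(q+\rho')/2$. Summing $\beta_j u_j'/u_j$ from \eqref{eq:mode-log}, the claim is that $\log U$ obeys
\[
\frac{U'}{U}=\sum_j\frac{\beta_j(q-\rho')}{2I_j}-\sum_j\beta_jQ_j .
\]
Separately, $\tilde U=(q+\rho')/2$ satisfies $\tilde U'=(q_\rho\rho'+\rho'')/2$ with $\rho''=F_\rho/2=qq_\rho-2r_\rho$. Then:
\begin{itemize}
\item using $dq/d\rho=-\sum\beta_jQ_j$ (chain rule on the power line), $r_\rho/r=-\sum\beta_j^2/I_j\cdot(1/\beta_j)\cdot\beta_j$, i.e.\ $r_\rho=-r\sum\beta_j^2/I_j$, and $\tilde U\tilde V=r$ with $\tilde V=(q-\rho')/2$,
\item a short computation gives $\tilde U'/\tilde U=\sum_j\beta_j^2\tilde V/I_j-\sum\beta_jQ_j$.
\end{itemize}
This is the same linear ODE that $U$ obeys once $V$ is identified with $\tilde V=r/\tilde U$. (Note $\beta_jV/I_j$ carries $\beta_j$ and the sum carries another $\beta_j$, which I will track carefully.) Since $U(z_0)=\tilde U(z_0)$ by hypothesis, uniqueness gives $U=\tilde U$. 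Then $V=\prod(u_jv_j)^{\beta_j}/U=r/\tilde U=\tilde V$, which gives \eqref{eq:products} and $H=U+V-q=0$.

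\textbf{Step 4: Hamilton's equations.} Substituting $V=\tilde V$ into \eqref{eq:mode-log} turns the separated equations into exactly $u_j'=\beta_jV/u_j\cdot(\ldots)$, that is, \eqref{eq:canonical}. Uniqueness for the analytic canonical ODE then identifies these with the local canonical solutions. Only local statements are needed: the chart is simply connected, and the nonvanishing assumptions make all logarithms single-valued.

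\textbf{Main obstacle.} The main obstacle is Step 3. It requires consistent bookkeeping showing that $U$ and $(q+\rho')/2$ satisfy the same first-order equation, including the $\beta_j$ factors and the identity $q_\rho=-\sum\beta_jQ_j$. A secondary subtlety is that Step 1 rests on compatible lifts, so that the sigma quasi-periodicity in Lemma~\ref{lem:kernel} and the primitives $R_k$ are continued consistently. I would handle this by restricting to the simply connected chart and citing the lemma's continuation statement.
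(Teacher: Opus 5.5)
Your proposal is correct and follows the paper's proof essentially step for step. Both arguments use the same logarithmic-derivative check via \eqref{eq:point-factors} and \eqref{eq:B-integral}, and the same chain-rule identities $q_\rho=-\sum_j\beta_jQ_j$ and $r_\rho=-r\sum_j\beta_j^2/I_j$ to show that $U$ and $(q+\rho')/2$ share a logarithmic derivative; then $UV=r$, substitution back into \eqref{eq:canonical}, and analytic uniqueness finish the proof. The only correction is in your Step~3 display, where the first sum should carry $\beta_j^2(q-\rho')/(2I_j)$ rather than $\beta_j(q-\rho')/(2I_j)$, as your own subsequent computation already shows.
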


\begin{proof}
The point-product identity and initial paired powers give $u_jv_j=I_j$. Differentiating \eqref{eq:mode-solutions}, then using \eqref{eq:point-factors} and \eqref{eq:B-integral}, gives
\[
\frac{u_j'}{u_j}
=\frac{\rho'(\mu_j)-\rho'(z)}{2[\rho(\mu_j)-\rho(z)]}+B_j(\rho(z))
=\frac{\beta_j[q(\rho(z))-\rho'(z)]}{2I_j(z)}-Q_j(I(z)).
\]
This is the $u_j$ equation in \eqref{eq:mode-log}; the $v_j$ equation follows in the same way. To verify the complete products from the prescribed scalar data, the chain rule gives
\[
q_\rho=-\sum_{j=1}^{N}\beta_jQ_j(I),\qquad
r_\rho=-r\sum_{j=1}^{N}\frac{\beta_j^2}{I_j}.
\]
Together with $\rho''=qq_\rho-2r_\rho$ and $q^2-(\rho')^2=4r$, these imply
\[
\frac{d}{dz}\log\frac{q+\rho'}2
=\frac{q-\rho'}2\sum_{j=1}^{N}\frac{\beta_j^2}{I_j}
-\sum_{j=1}^{N}\beta_jQ_j(I)
=\sum_{j=1}^{N}\beta_j\frac{u_j'}{u_j}.
\]
The initial product condition therefore gives $U=(q+\rho')/2$. Since $UV=r$, it also gives $V=(q-\rho')/2$. Substituting these products into the logarithmic equations recovers the polynomial canonical equations \eqref{eq:canonical}. Finally, $H=0$ follows from $U+V=q$, and the weighted invariants follow from $I_j=\gamma_j-\beta_j\rho$. The canonical vector field is polynomial, so local uniqueness for analytic ordinary differential equations identifies the constructed modes with the solution of the prescribed initial-value problem on this chart.
\end{proof}

For a general leading coefficient $f\ne0$, the same mode formula uses the normalized curve
\begin{equation}\label{eq:general-leading-normalization}
 \tau=sz,\qquad Y=\rho'/s,\qquad s^2=f/4,\qquad Y^2=F(\rho)/s^2.
\end{equation}
Compute the point factors on this curve, with fixed ordinates $b_j/s$, and take $\mathcal C_j=\int_{\tau_0}^{\tau} (B_j/s)\,d\tau$ using its differential basis. The modes are still evaluated at $\tau=sz$ with their original initial values. Indeed,
\[
 \frac{d}{d\tau}\log u_j
 =\frac{b_j/s-Y}{2(a_j-\rho)}+\frac{B_j}{s},
 \qquad a_j=\gamma_j/\beta_j,\quad b_j=q(a_j),
\]
so multiplication by $s$ recovers~\eqref{eq:split-log}; the $v_j$ equation follows likewise. The paired powers and complete products therefore retain their original Hamiltonian values. This extends the reconstruction to any nonzero leading coefficient without rescaling the modes.

The nonzero-power and nonvanishing-factor conditions specify a chart for the logarithmic proof, rather than singularities of the polynomial canonical equations. At a modal zero, those equations can continue a finite solution even though its logarithmic derivative is undefined; the sigma expressions must then be interpreted by continuation or limits. Finite turning points likewise do not obstruct scalar continuation. Continuation around curve cycles is a separate issue and can introduce multipliers, as discussed in Section~\ref{sec:discussion}.

\subsection{Polynomials with vanishing mode correction}\label{sec:special-Q}

Recall from~\eqref{eq:B-def} that, with $a_j=\gamma_j/\beta_j$,
\[
 B_j(\rho)=\frac{q(\rho)-q(a_j)}{2(a_j-\rho)}
             -Q_j(I(\rho)),\qquad
 q(\rho)=Q(I(\rho)),\quad I_i(\rho)=\gamma_i-\beta_i\rho.
\]
Here $Q_j=\partial Q/\partial I_j$ is taken before restriction to the power
line. We call $Q$ universally special if every $B_j$ vanishes identically
for every invariant tuple
$\boldsymbol\gamma$ satisfying $\sum_{j=1}^{N}\gamma_j=0$.
Then $\mathcal C_j=0$ in~\eqref{eq:mode-solutions}, so the complete modes
are the normalized Baker--Akhiezer factors of
Section~\ref{sec:corrected-point-factors}.

The following classification is algebraic and does not require the
odd-degree curve assumptions.

We use the Chan--Chyan--Srivastava multivariable Lagrange
polynomials~\cite{ccs,ccs-umbral} with parameters
$-\boldsymbol\beta/2=(-\beta_1/2,\ldots,-\beta_N/2)$, defined by
the ordinary generating function
\begin{equation}\label{eq:special-ccs}
 \prod_{j=1}^{N}(1-x_jt)^{\beta_j/2}
 =\sum_{n=0}^{\infty}g_n^{(-\boldsymbol\beta/2)}(x)t^n.
\end{equation}
(These multivariable polynomials are distinct from the Weierstrass
invariants $g_2,g_3$.)

\begin{theorem}[Universally special polynomials]\label{thm:special-Q}
Let $M=\sum_{j=1}^{N}\beta_j$. If $M$ is odd, the universally special
polynomials are precisely the constants. If $M=2m$, they are precisely
\begin{equation}\label{eq:special-Q}
 Q(I)=\kappa g_m^{(-\boldsymbol\beta/2)}
                  (I_1/\beta_1,\ldots,I_N/\beta_N)+E,
\end{equation}
where $\kappa,E$ are arbitrary constants.
No symmetry assumption on $Q$ is required.
\end{theorem}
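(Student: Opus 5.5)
The first step is to rewrite the condition $B_j\equiv0$ as a polynomial identity on all of $\mathbb C^N$. Put $x_i=I_i/\beta_i$ and $Q(I)=P(x)$, so that $Q_j=P_{x_j}/\beta_j$. Along the power line, $I_j(\rho)=\beta_j(a_j-\rho)$, so $a_j-\rho=x_j$. The point $I(a_j)$ is the point of the same line with $I_j=0$, namely $x-x_j\mathbf 1$, where $\mathbf 1=(1,\ldots,1)$. Every $x\in\mathbb C^N$ lies on exactly one admissible line: take $\rho=-\sum_iI_i/M$ and $\gamma=I+\beta\rho$, which satisfies $\sum_i\gamma_i=0$. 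Universal speciality is therefore equivalent to
\[
\frac{2}{\beta_j}\,x_j\,\partial_{x_j}P(x)=P(x)-P(x-x_j\mathbf 1),\qquad j=1,\ldots,N,
\]
holding identically in $x$. Both sides preserve homogeneous degree, because $x\mapsto x-x_j\mathbf 1$ is linear. The system therefore splits into independent systems for the homogeneous components $P_n$. Constants always solve it, which accounts for $E$.

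For sufficiency, let $G(x,t)=\prod_i(1-x_it)^{\beta_i/2}$. The factorization $1-(x_i-x_j)t=(1+x_jt)\bigl(1-x_it/(1+x_jt)\bigr)$ gives
\[
G(x-x_j\mathbf 1,t)=(1+x_jt)^{M/2}\,G\!\left(x,\tfrac{t}{1+x_jt}\right).
\]
We also have $x_j\partial_{x_j}G=-\tfrac{\beta_j}{2}\,\tfrac{x_jt}{1-x_jt}\,G$. With $M=2m$, I would extract the coefficient of $t^m$ from both sides, using the fact that $g_m$ is homogeneous of degree $m$. Concretely, I would compare $G(x,t)-(1+x_jt)^mG(x,t/(1+x_jt))$ with $\tfrac{2}{\beta_j}x_j\partial_{x_j}G$ at order $t^m$. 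The check reduces to the binomial expansion $(1+x_jt)^{m-n}$ acting on $g_nt^n$, together with a one-variable resummation in $x_jt$. I expect this to be routine bookkeeping.

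For necessity, multiply the $j$-th equation by $\beta_j$ and sum. Euler's identity turns this into
\[
\sum_j\beta_j\,P_n(x-x_j\mathbf 1)=(M-2n)\,P_n(x).
\]
Evaluating at $x=\lambda\mathbf 1$, where every shifted argument vanishes, gives $(M-2n)P_n(\mathbf 1)=0$ for $n\ge1$. Hence any nonconstant component either has $n=M/2$ or vanishes on the diagonal.

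The main obstacle is to exclude diagonal-vanishing solutions and to show that the solution space in degree $m$ is one-dimensional. For this I would fix $j$ and write $x=y+s\mathbf 1$ with $y_j=0$. The $j$-th equation then expresses $P(y+s\mathbf 1)-P(y)$ through $\partial_{x_j}P$. Combining it with the equations for $i\neq j$ expresses $\partial_{x_j}P$ along $\mathbf 1$-lines. I would then expand in powers of $s$ and argue by induction on the $s$-degree that $P$ is determined by its restriction to the diagonal, that is, by the single number $P_n(\mathbf 1)$. If this determination holds, diagonal vanishing forces $P_n=0$, and in degree $m$ we get $P_m=\kappa g_m$ with $\kappa=P_m(\mathbf 1)/g_m(\mathbf 1)$, provided $g_m(\mathbf 1)\ne0$. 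That proviso holds because $g_m(\mathbf 1)$ is the $t^m$ coefficient of $(1-t)^m$, which equals $(-1)^m$. When $M$ is odd no degree $n\ge1$ satisfies $n=M/2$, so only constants survive. If the inductive determination proves awkward, my fallback is to restrict to lines $x=y+s\mathbf 1$ and treat the system as a first-order linear ODE in $s$ with a regular singular point. The indicial exponents should then force uniqueness in the same way.
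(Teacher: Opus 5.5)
Your reduction to $P(x)-P(x-x_j\mathbf 1)=\tfrac{2}{\beta_j}x_j\partial_jP(x)$ matches the paper. So do the splitting into homogeneous parts, the generating-function proof of sufficiency, and the weighted Euler sum giving $(M-2n)P_n(\mathbf 1)=0$. Sufficiency does close, via $g_m(x-x_j\mathbf 1)=\sum_{k=0}^{m}x_j^{m-k}g_k(x)$ and $\tfrac{2}{\beta_j}x_j\partial_jg_m=-\sum_{k=0}^{m-1}x_j^{m-k}g_k(x)$.

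The gap is the step you yourself call the main obstacle. Excluding degrees $n\ne M/2$ and proving one-dimensionality in degree $m$ both need the fact that a nonzero homogeneous solution of positive degree cannot vanish at $\mathbf 1$. You only assume this (``if this determination holds'') and do not specify the induction. Your fallback cannot supply it by itself. The natural first-order equation in $s$ along $x+s\mathbf 1$ comes from solving the $i$-th identity at $x+s\mathbf 1$ for $\partial_iP$ and summing over $i$:
\[
\frac{d}{ds}P(x+s\mathbf 1)=\sum_{i=1}^{N}\frac{\beta_i}{2(x_i+s)}\bigl\{P(x+s\mathbf 1)-P(x-x_i\mathbf 1)\bigr\}.
\]
Every polynomial in the differences $x_i-x_k$ satisfies this equation. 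An example is $(x_1-x_2)^2$, which vanishes on the diagonal but does not solve the system. So indicial exponents of that equation cannot force uniqueness; the individual equations must be used separately.

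The missing step is short. Expand $P_n(x+s\mathbf 1)=\sum_{k=0}^{r}A_k(x)s^k$ with $A_r\ne0$ at arbitrary $x$; imposing $x_j=0$ is unnecessary. Apply the $j$-th identity at $x+s\mathbf 1$. Because $(x+s\mathbf 1)-(x_j+s)\mathbf 1=x-x_j\mathbf 1$, the shifted term is independent of $s$, and clearing the denominator gives
\[
2(x_j+s)\,\partial_jP_n(x+s\mathbf 1)=\beta_j\bigl\{P_n(x+s\mathbf 1)-P_n(x-x_j\mathbf 1)\bigr\}.
\]
The right side has $s$-degree at most $r$, while the $s^{r+1}$ coefficient on the left is $2\partial_jA_r$. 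Hence $\partial_jA_r=0$ for every $j$, so $A_r$ is a nonzero constant. Since $A_r$ is homogeneous of degree $n-r$, this forces $r=n$ and $A_n=P_n(\mathbf 1)\ne0$.

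Comparing all coefficients instead gives $2\partial_jA_{k-1}=\beta_jA_k-2x_j\partial_jA_k$. Together with Euler's identity for each $A_{k-1}$, this is exactly the downward induction on $s$-degree you had in mind.

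Combined with $(M-2n)P_n(\mathbf 1)=0$, this gives $n=M/2$. Subtracting $(-1)^mP_m(\mathbf 1)\,g_m$ then yields uniqueness. This is the paper's argument.
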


\begin{proof}
Put $x_j=I_j/\beta_j$ and
$T(x)=Q(\beta_1x_1,\ldots,\beta_Nx_N)$.
At a point of the power line, $x_j=\gamma_j/\beta_j-\rho$, so
vanishing of~\eqref{eq:B-def} is equivalent to
\begin{equation}\label{eq:special-identity}
 T(x)-T(x-x_j\mathbf1)
 =\frac{2x_j}{\beta_j}\partial_jT(x),\qquad j=1,\ldots,N,
\end{equation}
where $\mathbf1=(1,\ldots,1)$. This identity holds for arbitrary $x$:
choose $\rho=-\sum_{i=1}^{N}\beta_i x_i/M$ and
$\gamma_i=\beta_i(x_i+\rho)$. Conversely it implies $B_j=0$, including
$x_j=0$ by polynomial continuation.

Each homogeneous part of $T$ satisfies~\eqref{eq:special-identity}
separately, since the substitutions are linear and $x_j\partial_j$
preserves degree. Consider a nonzero homogeneous solution of degree $d>0$.
We first show that $T(\mathbf1)\ne0$. Adding the same scalar $s$ to every
variable gives a polynomial in $s$,
\[
 T(x+s\mathbf1)=\sum_{k=0}^r A_k(x)s^k,\qquad A_r\ne0,\quad r\le d.
\]
Apply~\eqref{eq:special-identity} at $x+s\mathbf1$ and clear its denominator:
\[
 2(x_j+s)\partial_jT(x+s\mathbf1)
 =\beta_j\{T(x+s\mathbf1)-T(x-x_j\mathbf1)\}.
\]
The coefficient of $s^{r+1}$ gives $\partial_jA_r=0$ for every $j$.
Thus $A_r$ is a nonzero constant. It is homogeneous of degree $d-r$,
so $r=d$ and $A_d=T(\mathbf1)\ne0$.

At $x=\mathbf1$, equation~\eqref{eq:special-identity} gives
$2\partial_jT(\mathbf1)=\beta_jT(\mathbf1)$, since $T(0)=0$.
Summing and using Euler's homogeneous identity yields
$2dT(\mathbf1)=MT(\mathbf1)$, hence $2d=M$.
Two solutions of this degree are proportional: subtract a multiple to
make their difference vanish at $\mathbf1$, then apply the preceding
argument. This proves necessity and uniqueness up to scale and constants.

For existence when $M=2m$, let $\Phi_x(t)$ denote the generating
function in~\eqref{eq:special-ccs}. Differentiation gives
\[
 \partial_jg_m^{(-\boldsymbol\beta/2)}(x)=-\frac{\beta_j}{2}
             \sum_{k=0}^{m-1}x_j^{m-1-k}g_k^{(-\boldsymbol\beta/2)}(x).
\]
Also
\[
 \begin{aligned}
 \Phi_{x-x_j\mathbf1}(t)
 &=(1+x_jt)^m\Phi_x\!\left(\frac{t}{1+x_jt}\right),\\
 g_m^{(-\boldsymbol\beta/2)}(x-x_j\mathbf1)
 &=\sum_{k=0}^m x_j^{m-k}g_k^{(-\boldsymbol\beta/2)}(x).
 \end{aligned}
\]
The last identity follows by taking the coefficient of $t^m$.
Together these formulas give~\eqref{eq:special-identity} for
$g_m^{(-\boldsymbol\beta/2)}(x)$.
Finally $g_m^{(-\boldsymbol\beta/2)}(s\mathbf1)=(-s)^m$,
so the solution is nonzero.
\end{proof}

Theorem~\ref{thm:special-Q} selects the degree-$m$ member of this
established polynomial family with parameters $-\boldsymbol\beta/2$.
Here
``Lagrange polynomial'' refers to this generating function, rather than
to the cardinal polynomials of Lagrange interpolation.

The theorem requires the same $Q$ to cancel every $B_j$ for every invariant
tuple. On a single power line, additional polynomials can give the same
dynamics: if $L$ is affine linear and vanishes on that line, adding
$L^2R$ to $Q$ changes neither its value nor its first partial derivatives
there, for any polynomial $R$. Such additions generally change the
dynamics on other power lines.

The parity condition concerns the total weight $M=\sum_{j=1}^{N}\beta_j$.
For unit weights $M=N$; with unequal weights, an odd number of mode pairs
can have even total weight and admit a nonconstant special $Q$.

For constant $Q=E$, our convention $H=0$ gives the energy level $U+V=E$.
The constant does not change Hamilton's equations, but it enters the scalar
curve:
\begin{equation}\label{eq:special-constant-F}
 F(\rho)=E^2-4r(\rho).
\end{equation}
Here $\deg F=\deg r=M$, so odd $M=2g+1$ gives genus $g$ when $F$ is
squarefree. For nonconstant special $Q$, cancellation in $q^2-4r$ can
instead lower the degree; the total-weight-ten example in
Section~\ref{sec:special-examples} has a degree-five, genus-two curve.

\section{\^Onishi determinants for the complete products}\label{sec:onishi-products}

The mode solutions and their Hamiltonian product identities have already been established in Theorem~\ref{thm:mode-solutions}, independently of the determinant representation:
\[
U(z)=\prod_{j=1}^{N} u_j(z)^{\beta_j}=\frac{q(\rho(z))+\rho'(z)}2,\qquad
V(z)=\prod_{j=1}^{N} v_j(z)^{\beta_j}=\frac{q(\rho(z))-\rho'(z)}2.
\]
We now relate these products to \^Onishi's identities for configurations admitting the distinct-point partition described below. A repeated mode point can occur once in each of the main and residual lists, accommodating some weights $\beta_j>1$ without derivative rows. This connects the reconstructed modes to the same addition identities that give their paired powers. The essential step is to account for every uncorrected point factor in the determinant's sigma expression, including those contained in its composite factor; this will also explain why the exponential corrections disappear from the complete products.

We know the number $N$ of mode pairs, the genus $g$ of the scalar curve and the polynomial $q$. We now ask whether the two curve functions on the right can be represented by \^Onishi determinants. The size $n$ of the required matrix has yet to be determined. We choose it from the columns available for this genus and the degree of $q$. We begin with that choice, before relating the determinant's sigma expression to the mode solutions.

\subsection{Step 1: determine the columns and size, then fit the product}

The determinant formulas in this section use the leading-four convention. For a general leading coefficient, apply them to the normalized curve~\eqref{eq:general-leading-normalization} with ordinate $Y$ and polynomial $q/s$; multiplying the resulting curve functions $(q/s\pm Y)/2$ by $s$ recovers the original products $U,V$.

An \^Onishi determinant is formed by evaluating a prescribed list of curve functions at $n$ points, one point per row. The standard column functions are polynomials in the two curve coordinates, ordered by their poles at infinity \cite[Theorem~7.2]{onishi}. They include both pure powers of the coordinate and mixed coordinate--ordinate monomials. The determinant uses the first $n$ functions in this order; we cannot choose an arbitrary collection of columns. Here $\rho$ has pole order two and the monic ordinate $\rho'/2$ has pole order $2g+1$. The beginning of the ordered list is therefore
\begin{equation}\label{eq:onishi-pole-columns}
1,\rho,\ldots,\rho^g,\rho'/2,\rho^{g+1},\rho\rho'/2,\rho^{g+2},\ldots,
\end{equation}
with pole orders $0,2,\ldots,2g,2g+1,2g+2,2g+3,2g+4,\ldots$. The factor $1/2$ converts our leading-four curve to \^Onishi's monic convention: $(\rho'/2)^2=F(\rho)/4$, with the same Abel differential $d\rho/[2(\rho'/2)]=d\rho/\rho'$.

To obtain $q\pm\rho'$, the matrix must contain the ordinate column and enough polynomial columns to express the already known $q$. We choose the shortest initial column list that meets both requirements:
\[
\begin{array}{c|c|c}
\text{known scalar degree}&\text{required columns}&\text{chosen size }n\\\hline
\deg q\le g &(1,\rho,\ldots,\rho^g,\rho'/2)&g+2\\
\deg q=g+1 &(1,\rho,\ldots,\rho^g,\rho'/2,\rho^{g+1})&g+3\\
d=\deg q>g+1 &\text{initial list through }\rho^d&2d-g+1.
\end{array}
\]
The first two column arrangements also occur in the interpolation determinants used by Bernatska and Kopeliovich for hyperelliptic divisor reduction~\cite[Sections~3.1--3.2]{divisor-addition}, up to the ordinate normalization and row order. We first explain the product construction for these two arrangements, including the column operation in the genus-two example, then give the calculation for the longer lists in Section~\ref{sec:longer-columns}.

Even when $\deg q<g$, we need $g+2$ columns to reach the ordinate. For $\deg q=g+1$, we add the next polynomial column. Thus $n$ is determined here by $g$ and the scalar degree, whereas $N$ was determined by the number of canonical pairs in the Hamiltonian. For the two worked examples presented later in Section~\ref{sec:examples}, these choices give
\begin{center}
\begin{tabular}{cccc}\toprule
Known genus $g$ & Known degree $\deg q$ & Known mode count $N$ & Chosen size $n$\\\midrule
2 & 3 & 5 & 5\\
3 & 4 & 7 & 6\\
\bottomrule\end{tabular}
\end{center}

For these first two arrangements there is exactly one ordinate column, $\rho'/2$, and every other column is a power of $\rho$. Expanding in the moving row gives a constant coefficient of $\rho'/2$, because its cofactor contains only fixed rows. The calculation below identifies that nonzero coefficient as a Vandermonde minor. For higher scalar degree the initial list also contains mixed columns such as $\rho\rho'/2$; at the fixed points, substituting $\rho'(\mu_j)=q(\rho(\mu_j))$ makes these entries polynomial in the fixed coordinates.

We can now choose $n-1$ fixed points for the matrix and append the moving point as its last row. The fixed mode points are
\[
P_j=(\rho(\mu_j),\rho'(\mu_j)),\qquad
\rho'(\mu_j)=q(\rho(\mu_j)),
\]
and the moving point is $P(z)=(\rho(z),\rho'(z))$. For the cubic genus-two example, the size just derived is $n=5$. Evaluating these column functions at the fixed points $P_1,P_2,P_3,P_4$ gives the first four rows; evaluation at $P(z)$ gives the last row:
\begin{equation}\label{eq:onishi-g2-matrix}
D_5(z)=\det\begin{pmatrix}
1&\rho(\mu_1)&\rho(\mu_1)^2&\rho'(\mu_1)/2&\rho(\mu_1)^3\\
1&\rho(\mu_2)&\rho(\mu_2)^2&\rho'(\mu_2)/2&\rho(\mu_2)^3\\
1&\rho(\mu_3)&\rho(\mu_3)^2&\rho'(\mu_3)/2&\rho(\mu_3)^3\\
1&\rho(\mu_4)&\rho(\mu_4)^2&\rho'(\mu_4)/2&\rho(\mu_4)^3\\
1&\rho(z)&\rho(z)^2&\rho'(z)/2&\rho(z)^3
\end{pmatrix}.
\end{equation}
Let $A$ denote the indices of the chosen fixed rows; here $A=\{1,2,3,4\}$. Write $q(\rho)=q_0+q_1\rho+q_2\rho^2+q_3\rho^3$, and denote the $j$th matrix column by $C_j$. Subtracting a linear combination of the other columns from a column leaves the determinant unchanged. We therefore subtract half this polynomial from the fourth column:
\[
C_4\longmapsto C_4-\tfrac12(q_0C_1+q_1C_2+q_2C_3+q_3C_5).
\]
Every fixed entry becomes zero because $\rho'(\mu_j)=q(\rho(\mu_j))$. Only the moving entry remains:
\[
C_4\longmapsto
\begin{pmatrix}0\\0\\0\\0\\{}[\rho'(z)-q(\rho(z))]/2\end{pmatrix}.
\]
Expansion in this column has sign $(-1)^{5+4}=-1$ and leaves the fixed-point Vandermonde minor. Hence
\[
D_5(z)=\Delta_A\frac{q(\rho(z))-\rho'(z)}2,\qquad
\Delta_A=\prod_{1\le i<j\le4}[\rho(\mu_j)-\rho(\mu_i)].
\]
We have made the determinant reproduce $V$ directly. Let $\iota$ (the Greek letter iota) denote the sheet involution, which reverses the ordinate while preserving the coordinate: $\iota(P_j)=(\rho(\mu_j),-\rho'(\mu_j))$ and write $P_j^*=\iota(P_j)$. Here the star denotes the sheet involution, not complex conjugation. Reversing the ordinates of the fixed points, that is $\rho'(\mu_j)\mapsto-\rho'(\mu_j)$ for $j\in A$, while leaving the moving point unchanged, replaces their polynomial values by $-q$. The same operation then gives $D_5^*(z)=-\Delta_A[q(\rho(z))+\rho'(z)]/2$, reproducing $U$.

The same column operation works in any genus for $n=g+2$ with $\deg q\le g$, or for $n=g+3$ with $\deg q=g+1$. The polynomial $q$ is already specified; its fixed-point values allow us to subtract it from the ordinate column. Distinct fixed coordinates ensure that the remaining Vandermonde minor is nonzero. Thus the column operation expresses the determinant as a constant multiple of the complete mode product $V=(q-\rho')/2$.

\begin{lemma}\label{lem:onishi-interpolation}
Let $A$ index $n-1$ chosen fixed points, and form $D_n$ by evaluating the stated column functions at these points and at $P(z)$. Let $D_n^*$ instead evaluate its fixed rows at $P_j^*=\iota(P_j)$, leaving the moving point unchanged. Define
\[
\Delta_A=\prod_{\substack{i,j\in A\\i<j}}[\rho(\mu_j)-\rho(\mu_i)],\qquad
\varepsilon=\begin{cases}1,&n=g+2,\\-1,&n=g+3.\end{cases}
\]
Then
\begin{equation}\label{eq:onishi-interpolation}
D_n(z)=\varepsilon\Delta_A\frac{\rho'(z)-q(\rho(z))}{2},\qquad
D_n^*(z)=\varepsilon\Delta_A\frac{\rho'(z)+q(\rho(z))}{2}.
\end{equation}
\end{lemma}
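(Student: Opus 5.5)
The plan is to generalize the column operation already carried out for the genus-two matrix $D_5$. In both arrangements there is exactly one ordinate column, at position $g+2$, namely $\rho'/2$. Every other column is a power $\rho^k$: $k=0,\ldots,g$ when $n=g+2$, and additionally $k=g+1$ in the last column when $n=g+3$. Since $\deg q\le n-2$ in each case, $q$ is a linear combination of exactly the polynomial columns present.

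We subtract $\tfrac12\sum_k q_k C_{(k)}$ from the ordinate column $C_{g+2}$, where $C_{(k)}$ denotes the column $\rho^k$. This leaves the determinant unchanged. Each fixed row $j\in A$ has entry $\rho'(\mu_j)/2-q(\rho(\mu_j))/2=0$ by \eqref{eq:mode-points}. The moving row has entry $[\rho'(z)-q(\rho(z))]/2$. Expanding along this column gives
\[
D_n=(-1)^{n+g+2}\,\frac{\rho'(z)-q(\rho(z))}{2}\,M,
\]
where $M$ is the minor obtained by deleting the moving row and the ordinate column.

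The next step is to identify $M$ and the sign.

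\emph{Case $n=g+2$.} The ordinate column is last, so the cofactor sign is $(-1)^{2n}=+1$. The minor $M$ is the Vandermonde matrix with columns $1,\rho,\ldots,\rho^g$ evaluated at the $g+1=n-1$ fixed rows. In its natural order this equals $\Delta_A$ as defined in the statement, assuming $A$ is listed in increasing index order, as in the example. Hence $\varepsilon=1$.

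\emph{Case $n=g+3$.} The sign is $(-1)^{2g+5}=-1$. The minor $M$ has columns $1,\ldots,\rho^g,\rho^{g+1}$ in increasing order, so it is again the Vandermonde determinant $\Delta_A$ on $g+2=n-1$ points. Hence $\varepsilon=-1$.

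This matches the genus-two computation: there $n=5=g+3$, and the paper obtained $D_5=\Delta_A(q-\rho')/2=-\Delta_A(\rho'-q)/2$. Distinctness of the fixed coordinates is not needed for the identity itself, only for $\Delta_A\ne0$.

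For $D_n^*$, the fixed rows are evaluated at $P_j^*=\iota(P_j)$. Their ordinate entries become $-\rho'(\mu_j)/2=-q(\rho(\mu_j))/2$, while the polynomial entries are unchanged. We therefore add, rather than subtract, $\tfrac12 q$ in the polynomial columns to the ordinate column. The fixed entries again vanish, and the moving entry becomes $[\rho'(z)+q(\rho(z))]/2$. The cofactor sign and the minor $M$ are identical to the previous case, which gives the second formula in \eqref{eq:onishi-interpolation}.

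There is no real obstacle here. The only care needed is bookkeeping: checking the position of the ordinate column in the \^Onishi ordering \eqref{eq:onishi-pole-columns} to fix the cofactor sign, and confirming that the Vandermonde orientation agrees with the definition of $\Delta_A$ (increasing row order, with differences $\rho(\mu_j)-\rho(\mu_i)$ for $i<j$).
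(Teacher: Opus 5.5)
Your proposal is correct and is essentially the paper's own proof: subtract $q/2$ (add it, for the opposite-sheet rows) to the ordinate column so that only the moving entry survives, then expand along that column. The remaining minor is the Vandermonde $\Delta_A$, and the sign $\varepsilon$ comes from the ordinate column sitting at position $n$ or $n-1$. Your explicit cofactor sign $(-1)^{n+g+2}$ just spells out what the paper states in one line.
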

\begin{proof}
Subtract $q/2$ from the ordinate column and expand its sole nonzero entry, as above. The ordinate is the last column when $n=g+2$; it precedes the last polynomial column when $n=g+3$, giving the sign $\varepsilon$. For the fixed points on the opposite sheet subtract $-q/2$ instead.
\end{proof}

\subsection{The Schur complement and the Vandermonde inverse}

The same determinant reduction can be expressed as a Schur complement. This makes the role of interpolation explicit: the fixed ordinates determine the polynomial part of the moving row through the inverse of the fixed-coordinate matrix.

For the two column arrangements just considered, first move the ordinate column to the last position. For $n=g+2$ it is already there; for $n=g+3$ this requires one column interchange. The resulting matrix has the block form
\begin{equation}\label{eq:onishi-schur-block}
\mathcal M_n(z)=\begin{pmatrix}
\mathsf W_A & \boldsymbol{\rho}'_A/2\\
\mathbf p(\rho(z))^T & \rho'(z)/2
\end{pmatrix},\qquad
\det\mathcal M_n(z)=\varepsilon D_n(z),
\end{equation}
where, in the order of the fixed rows,
\[
\mathbf p(\rho)=(1,\rho,\ldots,\rho^{n-2})^T,\qquad
\mathsf W_A=\bigl(\rho(\mu_j)^k\bigr)_{j\in A,\,0\le k\le n-2},\qquad
\boldsymbol{\rho}'_A=\bigl(\rho'(\mu_j)\bigr)_{j\in A}.
\]
Here $\boldsymbol{\rho}'_A$ is a column vector, and $\mathsf W_A$ is the $(n-1)\times(n-1)$ Vandermonde matrix with determinant $\Delta_A$. It is invertible because the selected fixed coordinates are distinct.

For these column arrangements, $\deg q\le n-2$. Writing $\mathbf q=(q_0,\ldots,q_{n-2})^T$ for its coefficient vector, padded with zeros when necessary, the fixed-point values satisfy
\[
\mathsf W_A\mathbf q=\boldsymbol{\rho}'_A.
\]
Consequently,
\begin{equation}\label{eq:onishi-vandermonde-q}
q(\rho)=\mathbf p(\rho)^T\mathsf W_A^{-1}\boldsymbol{\rho}'_A.
\end{equation}
Thus the Vandermonde inverse takes the fixed ordinates $\rho'(\mu_j)$ to the polynomial $q$ evaluated at the moving coordinate. If $\deg q>n-2$, the same expression gives only the polynomial interpolant of degree at most $n-2$ at those nodes and does not recover the entire $q$. This is interpolation of the already specified scalar restriction, expressed in matrix form.

The Schur complement of $\mathsf W_A$ in \eqref{eq:onishi-schur-block} is
\begin{equation}\label{eq:onishi-schur-complement}
\mathcal S(z)=\frac{\rho'(z)}2
 -\mathbf p(\rho(z))^T\mathsf W_A^{-1}\frac{\boldsymbol{\rho}'_A}{2}
 =\frac{\rho'(z)-q(\rho(z))}{2}.
\end{equation}
Block elimination gives $\det\mathcal M_n=\det\mathsf W_A\,\mathcal S$, and hence
\[
\frac{D_n(z)}{\Delta_A}=\varepsilon\mathcal S(z).
\]
For the later worked examples $\varepsilon=-1$, so $D_n/\Delta_A=-\mathcal S=V$. Applying the sheet involution only to the fixed points negates $\boldsymbol{\rho}'_A$ and leaves $\mathsf W_A$ unchanged. The corresponding Schur complement is $[\rho'(z)+q(\rho(z))]/2=U$, so $U=\varepsilon D_n^*/\Delta_A$. In the original $n=g+3$ column order, the upper-left block is not the Vandermonde matrix; the column interchange and its sign are essential to this interpretation.

\subsection{Higher scalar degree and mixed ordinate columns}\label{sec:longer-columns}

For $d=\deg q\ge g+1$, let $q_d$ be the leading coefficient and put $m=d-g-1$. The initial list ending at $\rho^d$ contains
\[
1,\rho,\ldots,\rho^d,\qquad y/2,\rho y/2,\ldots,\rho^m y/2,
\]
interleaved by pole order. There are $d+1$ polynomial columns and $d-g$ ordinate columns, giving $n=2d-g+1$. Choose $n-1$ distinct fixed coordinates on $y=q(\rho)$ and append the moving row as before. Subtracting $q/2$ from the $y/2$ column again makes all its fixed entries zero. Expansion along this column leaves a fixed-row minor. On these fixed rows, the remaining mixed entries satisfy $\rho^k y/2=\rho^k q(\rho)/2$, so the minor has the polynomial columns
\begin{equation}\label{eq:onishi-longer-minor}
1,\rho,\ldots,\rho^d,\rho q/2,\ldots,\rho^m q/2,
\end{equation}
after reordering. Their successive degrees are $0,\ldots,d+m=n-2$. The leading coefficients are one for the first $d+1$ columns and $q_d/2$ for the last $m$; triangular column elimination therefore gives the minor $(q_d/2)^m\Delta_A$.

The $y/2$ column has position $g+2$, so its cofactor sign is $(-1)^{n+g+2}=-1$. Moving the remaining $m$ ordinate columns past the polynomial columns requires $m(m+1)/2$ interchanges, giving the combined sign $(-1)^{1+m(m+1)/2}$. Reversing the fixed sheets replaces $q_d$ by $-q_d$. Thus, defining
\begin{equation}\label{eq:onishi-longer-signs}
\varepsilon_\pm=(-1)^{1+m(m+1)/2}(\pm q_d/2)^m,\qquad m=d-g-1,
\end{equation}
we obtain
\begin{equation}\label{eq:onishi-longer-interpolation}
D_n=\varepsilon_+\Delta_A\frac{y-q(\rho)}2,\qquad
D_n^*=\varepsilon_-\Delta_A\frac{y+q(\rho)}2.
\end{equation}
In particular $\varepsilon_-=(-1)^m\varepsilon_+$. When $d=g+1$, $m=0$ and both constants equal $-1$, agreeing with Lemma~\ref{lem:onishi-interpolation}. For $d\le g$, put $\varepsilon_+=\varepsilon_-=1$ so \eqref{eq:onishi-longer-interpolation} also includes the $n=g+2$ case.

For example, a quartic $q$ on a genus-two curve requires the seven columns
\[
1,\rho,\rho^2,y/2,\rho^3,\rho y/2,\rho^4.
\]
After subtraction and expansion in the fourth column, the fixed minor has columns
\[
1,\rho,\rho^2,\rho^3,\rho q/2,\rho^4.
\]
Interchanging the last two gives successive degrees zero through five and leading-coefficient product $q_4/2$. Together with the cofactor sign $(-1)^{7+4}=-1$, this yields $\varepsilon_+=q_4/2$ and $\varepsilon_-=-q_4/2$. The mixed ordinate column has contributed a constant prefactor; the moving factor is still $(y-q)/2$.

\subsection{Step 2: relate the sigma factors to the modes}

Before counting the mode factors, we display the sigma side of the identity. For the genus-two five-row determinant \eqref{eq:onishi-g2-matrix}, \^Onishi's formula \cite[Theorem~7.2]{onishi}, with its corrected sign, reads
\begin{equation}\label{eq:onishi-g2-full-sigma}
D_5(z)=-\frac{
\displaystyle\sigma\!\left(w(z)+\sum_{i=1}^4w(\mu_i)\right)
\prod_{1\le i<j\le4}\sigma\!\left(w(\mu_i)-w(\mu_j)\right)
\prod_{i=1}^4\sigma\!\left(w(\mu_i)-w(z)\right)}{
\displaystyle\sigma_2(w(z))^5\prod_{i=1}^4\sigma_2(w(\mu_i))^5}.
\end{equation}
The right side contains two kinds of moving sigma arguments: differences with individual fixed points, and a sum involving all four fixed points. The fixed--fixed differences are constants. Grouping the four individual moving factors with their normalizers gives the uncorrected point factors $\bar v_i$ defined in Lemma~\ref{lem:kernel}. What remains, apart from a fixed constant, is
\[
\frac{\sigma(w(z)+w(\mu_1)+\cdots+w(\mu_4))}{\sigma_2(w(z))}.
\]
We call this the \emph{composite factor} because its shift combines the fixed Abel vectors. We will show that it supplies the additional uncorrected point factors required by the complete mode product. The same grouping applies to all the initial column lists above in any genus.

The determinant vanishes when the moving point satisfies $\rho'=q(\rho)$. The identity
\[
(q-\rho')(q+\rho')=4r(\rho)
 =4\prod_{j=1}^{N}[\gamma_j-\beta_j\rho]^{\beta_j}
\]
locates every such zero, including its multiplicity. At a nonbranch fixed point $P_j$, the other factor is $q+\rho'=2q(\rho(\mu_j))\ne0$, so the zero of $q-\rho'$ has order $\beta_j$. At a branch fixed point $a_j=\rho(\mu_j)$, smoothness gives $\beta_j=1$. Write $F(\rho)=(\rho-a_j)F_j(\rho)$, where $F_j(a_j)=F'(a_j)\ne0$. In the local coordinate $\rho-a_j=t^2$, we may write $y=t s(t^2)$, where $s(\xi)=\sqrt{F_j(a_j+\xi)}$ is a locally chosen holomorphic square root and $s(0)\ne0$. Since $q(a_j)=0$, we have $q=O(t^2)$, so both $q-y$ and $q+y$ have simple zeros in $t$. Thus $q-\rho'$ has order $\beta_j$ in both cases. Their total multiplicity is $M=\sum_{j=1}^{N}\beta_j$. This count is algebraic: with the fixed rows held constant, the determinant is proportional to the meromorphic curve function $q(\rho)-y$. We count its zeros and poles on the compact curve, not in the propagation variable $z$ or on the full Jacobian. Its only pole is at infinity; in a local curve parameter $t$ there, $\rho=t^{-2}$ and $y\sim\pm2t^{-(2g+1)}$, so their pole orders are $2$ and $2g+1$. If $d=\deg q\le g$, the $y$ term gives pole order $2g+1$; if $d\ge g+1$, the leading term of $q$ gives pole order $2d$. A meromorphic function has equal total zero and pole orders, so $M=2g+1$ or $2d$, respectively. The chosen size is $n=g+2$ in the first case and $n=2d-g+1$ in the second. Removing the $n-1$ occurrences represented by the fixed rows therefore leaves $M-(n-1)=g$ further occurrences in either case. These are the remaining intersections of $y=q(\rho)$ with the curve, called the complementary intersection divisor in~\cite[Sections~3.1--3.2]{divisor-addition}. The count identifies the residual points but does not yet prove their sigma factorization. \^Onishi's identity applies with the moving argument constrained to the Abel image of this same curve; below, an auxiliary determinant and the two-point identity establish how the composite factor supplies the residual uncorrected point factors. No global meromorphicity in $z$, or of each individual sigma factor, is required.

In the genus-two example the complete list is $P_1,P_1,P_2,P_3,P_4,P_5$. The repetition of $P_1$ records $\beta_1=2$; in general a point occurs more than once precisely when $\beta_j>1$. The fixed rows of \eqref{eq:onishi-g2-matrix} account for $P_1,P_2,P_3,P_4$, leaving $P_1,P_5$. The determinant's sigma expression must account for these two further point factors, one for mode 1 and one for mode 5. We now identify where they occur and prove their factorization.

Let $B$ index the residual points not yet accounted for by the fixed rows indexed by $A$. We choose the fixed and residual points so each group has distinct coordinates internally, allowing a point to belong to both groups. Thus their combined occurrences reproduce the weights. For the two examples,
\begin{center}
\begin{tabular}{ccc}\toprule
Genus & Fixed points indexed by $A$ & Residual points indexed by $B$\\\midrule
2 & $P_1,P_2,P_3,P_4$ & $P_1,P_5$\\
3 & $P_1,P_2,P_3,P_4,P_5$ & $P_1,P_6,P_7$\\
\bottomrule\end{tabular}
\end{center}
There is no repeated row inside either determinant. The repeated factor will appear when the two lists are combined.

The complete zero list also fixes the summed Abel argument. Abel's theorem says that the weighted sum of the Abel vectors of the zeros of $q-\rho'$, whose poles are all at infinity, is a period vector. For the sigma factorization below, assume compatible fixed lifts have been chosen with this sum zero, and put
\[
\sum_{j=1}^{N}\beta_jw(\mu_j)=0,\qquad
E=\sum_{j\in A}w(\mu_j)=-\sum_{j\in B}w(\mu_j).
\]
Both examples have unit-weight points, so the period vector can be absorbed into one such point's path. Sigma normalizers and $h_{jk}$ must use these same fixed lifts. For the opposite-sheet residual points, use the lifts $-w(\mu_j)$ and the corresponding negative second-kind primitives, giving summed Abel argument $E$. At a branch point the geometric involution fixes the point, but the two lifts differ by a period; retaining the negative lift preserves the sigma quasi-period factors. This is the divisor-class relation used in interpolation-based reduction~\cite{divisor-addition}; here it explains the opposite-sheet points in the auxiliary determinant below.

\^Onishi's identity \cite[Theorem~7.2]{onishi} now groups into
\begin{equation}\label{eq:onishi-composite}
D_n(z)=K_A\frac{\sigma(w(z)+E)}{\sigma_\sharp(w(z))}
                  \prod_{j\in A}\bar v_j(z),
\end{equation}
Here $K_A$ is a nonzero scalar constant formed from the fixed-point sigma values and normalizers in the identity; it is independent of $z$. Their nonvanishing follows from the two-point identity and the distinct fixed coordinates. The $n-1$ factors $\bar v_j$ are the uncorrected point factors of Lemma~\ref{lem:kernel}. The additional ratio $\sigma(w+E)/\sigma_\sharp(w)$ has a summed fixed argument, and is the composite factor we must resolve. The fixed constants use the sign corrected by \^Onishi in his published errata to the determinant identity, $c_n=(-1)^{(n-g-1)(n+g^2+2g)/2}$ for $n\ge g$.

For genus two, apply the sheet involution to the residual points, using $P_1^*=\iota(P_1),P_5^*=\iota(P_5)$ and the moving point in an auxiliary three-row identity. Its columns are purely polynomial, so
\begin{equation}\label{eq:onishi-g2-auxiliary}
\det\begin{pmatrix}
1&\rho(\mu_1)&\rho(\mu_1)^2\\
1&\rho(\mu_5)&\rho(\mu_5)^2\\
1&\rho(z)&\rho(z)^2
\end{pmatrix}
=\Delta_B[\rho(z)-\rho(\mu_1)][\rho(z)-\rho(\mu_5)],
\end{equation}
where $\Delta_B=\rho(\mu_5)-\rho(\mu_1)$. The sheet involution leaves these polynomial entries unchanged, but gives the summed Abel argument $w-w(\mu_1)-w(\mu_5)=w+E$ in its sigma expression. Its other moving factors are $\bar u_1\bar u_5$. The full equality supplied by the sigma identity is
\[
\det\begin{pmatrix}
1&\rho(\mu_1)&\rho(\mu_1)^2\\
1&\rho(\mu_5)&\rho(\mu_5)^2\\
1&\rho(z)&\rho(z)^2
\end{pmatrix}
=K_B\frac{\sigma(w+E)}{\sigma_2(w)}\bar u_1\bar u_5.
\]
Here $K_B$ is another nonzero scalar constant determined by the fixed residual points, independent of $z$.
On the other hand, the two-point identity gives $\bar u_j\bar v_j=\rho-\rho(\mu_j)$ in genus two. Thus \eqref{eq:onishi-g2-auxiliary} is also $\Delta_B\bar u_1\bar v_1\bar u_5\bar v_5$. Comparing these expressions and cancelling $\bar u_1\bar u_5$ proves
\[
\frac{\sigma(w+E)}{\sigma_2(w)}=\Gamma_B\bar v_1\bar v_5,
\qquad \Gamma_B=\Delta_B/K_B.
\]
Substitution into \eqref{eq:onishi-composite} gives $D_5=K\bar v_1^2\bar v_2\bar v_3\bar v_4\bar v_5$, with $K=K_A\Gamma_B$. The composite factor has supplied exactly the missing occurrences: it is not an unexplained extra mode.

In general the auxiliary determinant has the $g$ residual points on the opposite sheet and the moving point, with columns $1,\rho,\ldots,\rho^g$. Its Vandermonde expression contains $\prod_{j\in B}[\rho-\rho(\mu_j)]$; its sigma expression contains $[\sigma(w+E)/\sigma_\sharp(w)]\prod_{j\in B}\bar u_j$. Comparing them through \eqref{eq:kernel-product} proves, by the same cancellation,
\begin{equation}\label{eq:onishi-residual-split}
\frac{\sigma(w+E)}{\sigma_\sharp(w)}
 =\Gamma_B\prod_{j\in B}\bar v_j,
\qquad \Gamma_B\ne0\text{ constant in }z.
\end{equation}
We cancel on a regular chart and extend the resulting identity wherever the continued expressions are defined. This is a factorization along the curve's Abel image, not on the entire Jacobian. Combining the main and residual factors, and then multiplying the two-point identities, yields
\begin{equation}\label{eq:onishi-raw-products}
\prod_{j=1}^{N}\bar u_j^{\beta_j}=K_+[q(\rho)+\rho'],\qquad
\prod_{j=1}^{N}\bar v_j^{\beta_j}=K_-[q(\rho)-\rho'],
\end{equation}
where $K_+,K_-$ are nonzero fixed constants. All the weighted uncorrected point factors have now been recovered from the determinant identity.

\subsection{Step 3: deduce cancellation of the exponential corrections}

The cancellation is a consequence of the complete factorization established in Step 2 together with the differential identities for the modes. The full modes include both the point factors and exponential corrections. Multiplying their normalized sigma expressions gives
\begin{equation}\label{eq:onishi-product-exponential}
\begin{aligned}
U(z)&=U(z_0)\prod_{j=1}^{N}\left[\frac{\bar u_j(z)}{\bar u_j(z_0)}\right]^{\beta_j}\exp\bigl[\mathcal E(z)\bigr],\\
\mathcal E(z)&=\left[y\sum_{j=1}^{N}\beta_jP_j(\rho)\right]_{z_0}^z
+\sum_{k=1}^g\left(\sum_{j=1}^{N}\beta_jd_{jk}\right)\Delta R_k\\
&\quad+\sum_{k=1}^g\left\{\sum_{j=1}^{N}\beta_j(c_{jk}-h_{jk})\right\}\Delta w_k.
\end{aligned}
\end{equation}
The corresponding $V$ formula uses $\bar v_j$ and the negative exponent. We show that each weighted contribution to $\mathcal E$ vanishes.

Step 2 identifies $\prod_{j=1}^{N}\bar u_j^{\beta_j}$ with a constant times $q+\rho'$. Their logarithmic derivatives are equal. Using the kernel differential on one side and the scalar chain rule on the other gives two expressions for this same derivative:
\[
\begin{aligned}
\frac{d}{dz}\log\prod_{j=1}^{N}\bar u_j^{\beta_j}
 &=\sum_{j=1}^{N}\beta_j\frac{\rho'(\mu_j)-\rho'}{2[\rho(\mu_j)-\rho]}
   +\sum_{k=1}^g\left(\sum_{j=1}^{N}\beta_jh_{jk}\right)\rho^{k-1},\\
\frac{d}{dz}\log(q+\rho')
 &=\sum_{j=1}^{N}\beta_j\frac{\rho'(\mu_j)-\rho'}{2[\rho(\mu_j)-\rho]}
   +\sum_{j=1}^{N}\beta_j B_j(\rho).
\end{aligned}
\]
The rational point terms cancel, giving
\[
\sum_{j=1}^{N}\beta_j B_j(\rho)
=\sum_{k=1}^g\left(\sum_{j=1}^{N}\beta_jh_{jk}\right)\rho^{k-1}.
\]
Since $\rho$ is nonconstant, this is a polynomial identity. Multiplication by $d\rho/y$ shows that the weighted correction differential is holomorphic. Uniqueness of the decomposition~\eqref{eq:B-reduction} then gives
\begin{equation}\label{eq:onishi-weighted-gauge}
\begin{aligned}
\sum_{j=1}^{N}\beta_jP_j&=0,\qquad
\sum_{j=1}^{N}\beta_jd_{jk}=0,\\
\sum_{j=1}^{N}\beta_j(c_{jk}-h_{jk})=0&\qquad(k=1,\ldots,g).
\end{aligned}
\end{equation}
Thus the algebraic endpoint terms, second-kind integrals and holomorphic exponential corrections all cancel in the complete products. The exponential in \eqref{eq:onishi-product-exponential} becomes one as a consequence of the complete \^Onishi factorization.

Only the normalized point-factor products remain. Equation~\eqref{eq:onishi-raw-products} gives
\[
U(z)=U(z_0)\frac{q(\rho(z))+\rho'(z)}{q(\rho(z_0))+\rho'(z_0)},\qquad
V(z)=V(z_0)\frac{q(\rho(z))-\rho'(z)}{q(\rho(z_0))-\rho'(z_0)}.
\]
The initial product conditions fix these as $(q+\rho')/2$ and $(q-\rho')/2$. Step 1 then identifies them with the normalized determinants:
\begin{equation}\label{eq:onishi-mode-products}
U(z)=\frac{D_n^*(z)}{\varepsilon_-\Delta_A},\qquad
V(z)=-\frac{D_n(z)}{\varepsilon_+\Delta_A}.
\end{equation}
The constants $\varepsilon_\pm$ are given in \eqref{eq:onishi-longer-signs}, with $\varepsilon_+=\varepsilon_-=1$ for $\deg q\le g$. For the two determinant examples below $\varepsilon_+=\varepsilon_-=-1$, hence $U=-D_n^*/\Delta_A$ and $V=D_n/\Delta_A$. This completes the connection between the Hamiltonian products, the sigma mode solutions and \^Onishi's determinant.

The balanced fixed lifts make the cancellation visible in the explicit exponent. Other compatible lifts can redistribute an exponential between the raw sigma product and its correction, but the normalized initial-value solutions, and therefore their determinant products, are unchanged. The construction uses internally distinct main and residual lists, which both determinant examples satisfy. More general multiplicities could be treated through confluent determinant identities, in which derivative limits replace coincident evaluation rows, but we do not pursue that extension here. Bernatska and Kopeliovich treat rational interpolation and repeated-point constructions in~\cite[Sections~3.3--3.5]{divisor-addition}.

\section{Elementary modes and polynomial second-order potentials}
\label{sec:polynomial-potentials}

Baker--Akhiezer functions and their relation to generalized Lam\'e
equations are discussed by Matsutani~\cite[Section~4.2.2.3]{matsutani-book}.
Here the restriction to the one-point Abel image and the prescribed mode
products lead to second-order equations with self-generating polynomial
potentials.

Write $a_j=\rho(\mu_j)=\gamma_j/\beta_j$ and
$b_j=\rho'(\mu_j)=q(a_j)$, so $I_j=\beta_j(a_j-\rho)$.
The equations in this section follow from the scalar reduction and apply
independently of the leading-four normalisation used for sigma evaluation.

\begin{theorem}[Vanishing correction and constant Wronskian]
\label{thm:special-W}
For modes satisfying~\eqref{eq:split-log} on a nonconstant scalar trajectory,
with correction $B_j$ defined by~\eqref{eq:B-def}, the Wronskian
$W_j=u_jv_j'-v_ju_j'$ satisfies
\begin{equation}\label{eq:special-W}
 W_j=-\beta_j\{b_j+2(a_j-\rho)B_j(\rho)\}.
\end{equation}
It is constant if and only if $B_j=0$ on that power line.
In this case $W_j=-\beta_jb_j$; when $b_j\ne0$, the modes form a
fundamental pair for a common equation $\psi''=G_j(\rho(z))\psi$,
by the converse of Abel's Wronskian identity.
\end{theorem}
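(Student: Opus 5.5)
The plan is to compute the Wronskian directly from the separated logarithmic equations \eqref{eq:split-log}. Everything then follows from the polynomial structure of $B_j$ and the constraint $(\rho')^2=F(\rho)$.

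First, I will write $W_j=u_jv_j\,(v_j'/v_j-u_j'/u_j)$ on the chart where $I_j\ne0$. With $a_j=\rho(\mu_j)$ and $b_j=\rho'(\mu_j)$, subtracting the two lines of \eqref{eq:split-log} makes the $\rho'$ terms cancel, leaving
\[
\frac{v_j'}{v_j}-\frac{u_j'}{u_j}=-\frac{b_j}{a_j-\rho}-2B_j(\rho).
\]
Multiplying by $u_jv_j=I_j=\beta_j(a_j-\rho)$ gives \eqref{eq:special-W}. The right side of \eqref{eq:special-W} is polynomial in $\rho$, so the identity extends by continuity through modal zeros.

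Second, for the equivalence, I differentiate to get $W_j'=-2\beta_j\rho'\,\frac{d}{d\rho}\bigl[(a_j-\rho)B_j(\rho)\bigr]$. On a nonconstant trajectory $\rho'\not\equiv0$, and $\rho(z)$ takes infinitely many values. Hence $W_j$ is constant exactly when the polynomial $(a_j-\rho)B_j(\rho)$ is constant in $\rho$. That polynomial vanishes at $\rho=a_j$, so it is constant only if it is identically zero, which forces $B_j\equiv0$. The converse is immediate, and setting $B_j=0$ in \eqref{eq:special-W} gives $W_j=-\beta_jb_j$.

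Third, for the second-order equation, assume $B_j=0$ and $b_j\ne0$. Constancy of $W_j$ gives $0=W_j'=u_jv_j''-v_ju_j''$, so $u_j''/u_j=v_j''/v_j$ wherever both are nonzero. Since $W_j\ne0$, the two modes are linearly independent, and the converse of Abel's identity says they form a fundamental pair for $\psi''=G\psi$ with $G=u_j''/u_j$.

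The step I expect to need care is showing that $G$ depends on $z$ only through $\rho$, that is, that it contains no odd powers of $\rho'$. I would write $u_j''/u_j=(u_j'/u_j)'+(u_j'/u_j)^2$ with $u_j'/u_j=(b_j-\rho')/(2(a_j-\rho))$, and substitute $\rho''=F_\rho/2$ and $(\rho')^2=F$. The terms linear in $\rho'$ are $b_j\rho'/(2(a_j-\rho)^2)$ from the derivative and $-b_j\rho'/(2(a_j-\rho)^2)$ from the square, and these cancel. This leaves
\[
G_j(\rho)=\frac{b_j^2-F(\rho)}{4(a_j-\rho)^2}+\frac{F(\rho)}{2(a_j-\rho)^2}-\frac{F_\rho(\rho)}{4(a_j-\rho)}.
\]
This is a function of $\rho$ alone, and by symmetry under $b_j\mapsto-b_j$ the same $G_j$ serves for $v_j$. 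Whether $G_j$ is polynomial is not needed for this theorem. I would only remark that $F(a_j)=b_j^2$ makes the numerator vanish at $a_j$.
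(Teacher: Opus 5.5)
Your proof is correct and is essentially the paper's. Both multiply the difference of the two lines of \eqref{eq:split-log} by $I_j=\beta_j(a_j-\rho)$, note that constancy forces the polynomial $(a_j-\rho)B_j(\rho)$, which vanishes at $a_j$, to be identically zero, and then use $W_j'=u_jv_j''-v_ju_j''=0$ with $W_j\ne0$.

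You add a check that $u_j''/u_j$ contains no odd powers of $\rho'$. The paper's proof leaves this implicit here and makes it explicit only in Theorem~\ref{thm:special-potential}. Your cancellation of the linear terms is right, but your displayed $G_j$ has a sign slip. The quotient rule contributes $-(\rho')^2/(2(a_j-\rho)^2)=-F/(2(a_j-\rho)^2)$, not $+F/(2(a_j-\rho)^2)$, so
\[
G_j(\rho)=\frac{b_j^2-F(\rho)}{4(a_j-\rho)^2}-\frac{F_\rho(\rho)}{4(a_j-\rho)}
=\frac14\frac{d}{d\rho}\frac{F(\rho)-b_j^2}{\rho-a_j}.
\]
This matches \eqref{eq:special-potential}. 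As a check, with $b_j^2=F(a_j)$ and $F=4\rho^3-g_2\rho-g_3$ it gives the Lam\'e potential $2\rho+a_j$.

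With your sign, the combined numerator over $4(a_j-\rho)^2$ would be $b_j^2+F-(a_j-\rho)F_\rho$. This equals $2b_j^2$ at $\rho=a_j$, so your formula would have a genuine double pole there. The conclusion you actually use is unaffected: $G_j$ depends only on $\rho$, and on $b_j$ only through $b_j^2$.
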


\begin{proof}
Multiply the difference of the logarithmic equations~\eqref{eq:split-log}
by $u_jv_j$ to obtain~\eqref{eq:special-W}.
If $W_j$ is constant, the polynomial $(a_j-\rho)B_j(\rho)$ is constant
on the nonconstant trajectory, hence identically constant. Its value at
$\rho=a_j$ is zero, so $B_j=0$. The converse follows immediately.
Finally $W_j'=u_jv_j''-v_ju_j''=0$ implies
$u_j''/u_j=v_j''/v_j$ on a regular chart, and a nonzero Wronskian gives
independence.
\end{proof}

\begin{theorem}[Polynomiality for a prescribed product]
\label{thm:special-potential}
Let a nonconstant analytic trajectory satisfy $(\rho')^2=F(\rho)$,
$\rho''=F_\rho/2$, with $F$ polynomial. Suppose
\[
 uv=\beta(a-\rho),\qquad uv'-vu'=\beta c,
\]
where $\beta\ne0$ and $a,c$ are constant. On a regular chart both modes
solve
\begin{equation}\label{eq:special-potential}
 \psi''=G_{a,c}(\rho)\psi,\qquad
 G_{a,c}(\rho)=\frac14\frac{d}{d\rho}
                    \frac{F(\rho)-c^2}{\rho-a}.
\end{equation}
This potential is polynomial if and only if $c^2=F(a)$.
When polynomial, its degree is $D-2$ for $\deg F=D\ge2$.
The pair is independent if and only if $c\ne0$.
\end{theorem}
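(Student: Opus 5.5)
The approach is to write each mode as a product ansatz determined by the prescribed product and Wronskian, and then compute its second logarithmic derivative using the scalar equations. Put $I=uv=\beta(a-\rho)$ and $W=uv'-vu'=\beta c$. On a regular chart with $I\ne0$,
\[
\frac{u'}{u}=\frac{I'}{2I}-\frac{W}{2I},\qquad \frac{v'}{v}=\frac{I'}{2I}+\frac{W}{2I},
\]
since the sum of the two logarithmic derivatives is $I'/I$ and their difference is $W/I$. Substituting $I'=-\beta\rho'$ gives
\[
\frac{u'}{u}=\frac{\rho'+c}{2(\rho-a)},\qquad \frac{v'}{v}=\frac{\rho'-c}{2(\rho-a)}.
\]
Both modes therefore have logarithmic derivative $L_\pm=(\rho'\pm c)/(2(\rho-a))$, with $c$ entering only through a sign.

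Next I would use $\psi''/\psi=L'+L^2$. Here $L'=\rho''/(2(\rho-a))-\rho'(\rho'\pm c)/(2(\rho-a)^2)$. Substitute $\rho''=F_\rho/2$ and $(\rho')^2=F$. In $L^2=(F\pm2c\rho'+c^2)/(4(\rho-a)^2)$ the cross term $\pm 2c\rho'$ cancels exactly against the $\mp 2c\rho'$ coming from $L'$. This leaves
\[
\frac{\psi''}{\psi}=\frac{F_\rho}{4(\rho-a)}-\frac{F-c^2}{4(\rho-a)^2}=\frac14\frac{d}{d\rho}\frac{F-c^2}{\rho-a}.
\]
This is $G_{a,c}(\rho)$ for both signs, so both modes solve \eqref{eq:special-potential}. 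The cancellation of the $\rho'$-odd terms is the only step needing care, and it is the one I expect to be the main obstacle, mainly as a matter of bookkeeping. It is what makes the potential a function of $\rho$ alone.

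For the polynomiality criterion, set $R(\rho)=(F(\rho)-c^2)/(\rho-a)$.
\begin{itemize}
\item If $c^2=F(a)$, then $R$ is a polynomial of degree $D-1$ and $G$ has degree $D-2$, since the leading coefficient $(D-1)f/4$ is nonzero for $D\ge2$.
\item If $c^2\ne F(a)$, write $R=S(\rho)+(F(a)-c^2)/(\rho-a)$ with $S$ polynomial. Then $G=\tfrac14 S'-\tfrac14(F(a)-c^2)(\rho-a)^{-2}$, which has a genuine double pole at $a$ and is not polynomial.
\end{itemize}
Evaluating along the trajectory does not affect this, since $G$ is defined as a function of $\rho$. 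If one wants polynomiality only along the trajectory, it is enough to note that $\rho$ is nonconstant, so it takes a continuum of values.

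For independence, the Wronskian is $\beta c$ with $\beta\ne0$. A nonzero Wronskian gives independence. Conversely, if $c=0$ then $u'/u=v'/v$, so $u$ and $v$ are proportional on the chart. Equivalently, the vanishing Wronskian of two solutions of the same second-order linear equation forces dependence.
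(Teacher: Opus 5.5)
Your proposal is correct and follows essentially the same route as the paper: you derive $u'/u$ and $v'/v$ from the prescribed product and Wronskian, differentiate using $(\rho')^2=F$ and $\rho''=F_\rho/2$, isolate the only obstruction $(c^2-F(a))/[4(\rho-a)^2]$ by polynomial division, and read independence off the Wronskian $\beta c$. Your explicit check that the $\pm 2c\rho'$ terms cancel, and your argument that $c=0$ forces $u'/u=v'/v$, fill in details that the paper leaves implicit.
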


\begin{proof}
The prescribed product and Wronskian give
\[
 \frac{u'}u=\frac{-c-\rho'}{2(a-\rho)},\qquad
 \frac{v'}v=\frac{c-\rho'}{2(a-\rho)}.
\]
Differentiation using the scalar equations gives~\eqref{eq:special-potential}.
Polynomial division then isolates its only obstruction:
\begin{equation}\label{eq:special-pole}
 G_{a,c}(\rho)=\frac14\frac{d}{d\rho}
                 \frac{F(\rho)-F(a)}{\rho-a}
              +\frac{c^2-F(a)}{4(\rho-a)^2}.
\end{equation}
The first term is polynomial; the double pole disappears precisely when
$c^2=F(a)$. If $F$ has leading coefficient $f_D$, the polynomial term
has leading coefficient $(D-1)f_D/4$ and degree $D-2$.
Independence follows from the Wronskian.
\end{proof}

For the special canonical modes, $c=-b_j$ and $b_j^2=F(a_j)$, so the
potential is automatically polynomial. Substitution of
$\rho=a_j-I_j/\beta_j$ expresses it as a polynomial in the paired power:
\begin{equation}\label{eq:special-self-generating}
 \begin{aligned}
 \mathcal P_j(I)&=G_{a_j,-b_j}(a_j-I/\beta_j),\\
 u_j''&=\mathcal P_j(u_jv_j)u_j,\qquad
 v_j''=\mathcal P_j(u_jv_j)v_j.
 \end{aligned}
\end{equation}
Its coefficients are fixed by the conserved curve and mode point.
For the prescribed product, the polynomiality condition in
Theorem~\ref{thm:special-potential} fixes the squared normalised Wronskian
$c^2=F(a)$; its two signs interchange $u,v$. The Hamiltonian solutions also satisfy the
complete-product constraints~\eqref{eq:products}.

For completeness, the second-order coefficients of the general modes are
also explicit. Suppressing the index $j$, put $d=a-\rho$, $y=\rho'$ and
$G_b=G_{a,b}$. Differentiating~\eqref{eq:split-log} gives
\begin{equation}\label{eq:special-general-second}
 \begin{aligned}
 u''/u&=G_b+B^2+bB/d+y(B_\rho-B/d),\\
 v''/v&=G_b+B^2+bB/d-y(B_\rho-B/d).
 \end{aligned}
\end{equation}
If $F$ is nonsquare, $1$ and $y$ are independent over $\mathbb C(\rho)$.
Polynomiality in $\rho$ of either coefficient therefore forces
$(a-\rho)B_\rho-B=0$. The polynomial $(a-\rho)B$ is then constant and
zero at $a$, giving $B=0$. Conversely $B=0$ gives the polynomial $G_b$.

For constant $Q=E$, the energy cancels explicitly from the potential:
\begin{equation}\label{eq:special-constant-potential}
 G_j(\rho)=\frac{d}{d\rho}\frac{r(\rho)}{a_j-\rho}.
\end{equation}
Since $\beta_j$ is a positive integer, the quotient is polynomial.
Equivalently, define a polynomial in all the modal powers by
\begin{equation}\label{eq:special-global-potential}
 \mathcal V_j(I_1,\ldots,I_N)
 =-\beta_j\sum_k\beta_k\partial_{I_k}
          \left(I_j^{\beta_j-1}\prod_{\ell\ne j}I_\ell^{\beta_\ell}\right).
\end{equation}
Then $u_j''=\mathcal V_j(I)u_j$ and $v_j''=\mathcal V_j(I)v_j$.
This also follows directly by differentiating~\eqref{eq:canonical} with
$Q_j=0$, so it holds globally as a polynomial identity in the canonical
variables. For unit weights it reduces to
$\mathcal V_j=-\sum_{k\ne j}\prod_{\ell\ne j,k}I_\ell$.
When $E\ne0$, every Wronskian $-\beta_jE$ is nonzero; at $E=0$ each
pair is proportional, though the polynomial second-order equation remains.

Finally, for $F(\rho)=4\rho^3-g_2\rho-g_3$, the potential is
$G_{a,c}=2\rho+a$ when $c^2=F(a)$, recovering the $n=1$ Lam\'e equation.
For higher degrees, the mode point generally enters nonconstant
coefficients as well; the resulting equations need not be a single
shared potential with different additive spectral parameters.

\section{Worked examples}\label{sec:examples}

We first give analytic mode solutions for a five-pair genus-two system and a seven-pair genus-three system with nonzero corrections. We then give three special-polynomial examples illustrating the classification and second-order potentials.

\subsection{Genus-two example with repeated mode}

This example demonstrates the mode-solution Theorem~\ref{thm:mode-solutions}
for a five-pair system on a smooth genus-two curve, including a double
mode and affine polynomial corrections.

\subsubsection{Polynomial data}

Choose degeneracies and power offsets:
\begin{equation}\label{ex:g2-data}
\beta = (2,1,1,1,1), \qquad \gamma = \left(-\frac{3}{5},-\frac{1}{5},-\frac{1}{10},\frac{2}{5},\frac{1}{2}\right).
\end{equation}
Note $\sum_{j=1}^{5}\beta_j = 6$ and $\sum_{j=1}^{5}\gamma_j = 0$. The centred power convention gives
\[
\rho = -\frac{\sum_{j=1}^{5} x_j}{\sum_{j=1}^{5} \beta_j} = -\frac{x_1+x_2+x_3+x_4+x_5}{6}.
\]

Define the polynomial $Q(x_1,\ldots,x_5)$ in independent power variables:
\begin{equation}\label{ex:g2-Q}
\begin{aligned}
Q(x_1,\ldots,x_5)&=4\left(-\frac{S(x)}6+\frac{27}{40}\right)\left(-\frac{S(x)}6+\frac1{20}\right)\left(-\frac{S(x)}6-\frac35\right)\\
&\quad+\frac13\left(\frac{x_1}{2}-x_2+\frac1{10}\right)x_3,
\qquad S(x)=\sum_{k=1}^{5}x_k.
\end{aligned}
\end{equation}

The second term is quadratic and vanishes on the power line, but changes individual partial derivatives. Writing the first term as $q(-S(x)/6)$, the full derivatives are
\[
\begin{aligned}
Q_1&=-q_\rho(-S(x)/6)/6+x_3/6,& Q_2&=-q_\rho(-S(x)/6)/6-x_3/3,\\
Q_3&=-q_\rho(-S(x)/6)/6+(x_1/2-x_2+1/10)/3,& Q_4=Q_5&=-q_\rho(-S(x)/6)/6.
\end{aligned}
\]
One must differentiate the polynomial in the independent powers before restricting to the line: $q$ alone does not specify the individual modes.

\subsubsection{The explicit coupled equations}

For the full system, set $I(z)=(u_1(z)v_1(z),\ldots,u_5(z)v_5(z))$. The polynomial multiplying each $u_j$ is $\Phi_j^{(2)}(I)=-Q_j(I)$; the same polynomial enters the $v_j$ equation with the opposite sign. It is useful to display these terms before restricting to the scalar trajectory. Define
\begin{equation}\label{ex:g2-common-phase}
A_2(x)=\frac{S(x)^2}{18}-\frac{S(x)}{36}-\frac{107}{400},\qquad S(x)=\sum_{k=1}^5x_k.
\end{equation}
Then
\begin{equation}\label{ex:g2-phases}
\begin{aligned}
\Phi_1^{(2)}(x)&=A_2(x)-x_3/6,\\
\Phi_2^{(2)}(x)&=A_2(x)+x_3/3,\\
\Phi_3^{(2)}(x)&=A_2(x)-(x_1/2-x_2+1/10)/3,\\
\Phi_j^{(2)}(x)&=A_2(x),\qquad j=4,5.
\end{aligned}
\end{equation}
The coupled polynomial equations are
\begin{equation}\label{ex:g2-coupled}
\begin{aligned}
u_1'&=\Phi_1^{(2)}(I)u_1+2v_1\prod_{k=2}^5v_k,\\
v_1'&=-\Phi_1^{(2)}(I)v_1-2u_1\prod_{k=2}^5u_k,\\
u_j'&=\Phi_j^{(2)}(I)u_j+v_1^2\prod_{\substack{2\le k\le 5\\k\ne j}}v_k,\\
v_j'&=-\Phi_j^{(2)}(I)v_j-u_1^2\prod_{\substack{2\le k\le 5\\k\ne j}}u_k,
\qquad j=2,\ldots,5.
\end{aligned}
\end{equation}
Every multiplicative term is thus an explicit polynomial in the paired products $u_kv_k$. The separate first equation pair shows the factor $2$ and the extra mode factor required by the double weight. The remaining equations have the unit-weight coupling products.

\subsubsection{Restriction to the power line}

On the power line $x_j = \gamma_j - \beta_j\rho$, we substitute to obtain the restricted polynomials. Define
\begin{equation}\label{ex:g2-restrictions}
Q(\gamma_1-\beta_1\rho,\ldots,\gamma_5-\beta_5\rho) = q(\rho), \qquad
\prod_{j=1}^{5} (\gamma_j-\beta_j\rho)^{\beta_j} = r(\rho).
\end{equation}

Explicitly, the restriction gives
\begin{equation}\label{ex:g2-q}
q(\rho) = 4\left(\rho + \frac{27}{40}\right)\left(\rho + \frac{1}{20}\right)\left(\rho - \frac{3}{5}\right).
\end{equation}

The weighted power product is
\begin{equation}\label{ex:g2-r}
r(\rho)=4\left(\rho + \frac{3}{10}\right)^{2}\left(\rho + \frac{1}{5}\right)\left(\rho + \frac{1}{10}\right)\left(\rho - \frac{2}{5}\right)\left(\rho - \frac{1}{2}\right).
\end{equation}

The scalar polynomial $F(\rho) = q(\rho)^2 - 4r(\rho)$ has degree-six terms that cancel, leaving
\begin{equation}\label{ex:g2-F}
\begin{aligned}F(\rho)={}&4 \rho^{5} - \frac{747 \rho^{4}}{100} - \frac{1581 \rho^{3}}{1000} + \frac{83993 \rho^{2}}{40000}\\&+ \frac{16113 \rho}{100000} + \frac{801}{1000000}.\end{aligned}
\end{equation}

This is squarefree of degree five with leading coefficient $4$, as required for a smooth genus-two curve.

\subsubsection{Polynomial corrections}

The fixed mode points satisfy $\rho(\mu_j) = \gamma_j/\beta_j$, giving
\[
(\rho(\mu_1),\ldots,\rho(\mu_5)) = \left(-\frac{3}{10},-\frac{1}{5},-\frac{1}{10},\frac{2}{5},\frac{1}{2}\right).
\]

Exact evaluation of \eqref{eq:B-def} gives
\begin{align}
B_1(\rho)&=\frac{41 \rho}{60} + \frac{67}{150},\label{ex:g2-B1}\\
B_2(\rho)&=\frac{283}{600} - \frac{\rho}{60},\label{ex:g2-B2}\\
B_3(\rho)&=\frac{7 \rho}{60} + \frac{27}{50},\label{ex:g2-B3}\\
B_4(\rho)&=\frac{23}{200} - \frac{53 \rho}{60},\label{ex:g2-B4}\\
B_5(\rho)&=- \frac{13 \rho}{12} - \frac{9}{100},\label{ex:g2-B5}
\end{align}
All corrections are affine, with unequal linear coefficients. The quadratic terms cancel between the divided difference and the full partial of $Q$, although $Q$ itself is cubic. Thus $\deg B_j=1<g=2$ without a rescaling of the curve. The coefficients $c_{j1},c_{j2}$ in the mode formulas are the constant and linear coefficients displayed above.

\subsubsection{Mode solutions}

The Abel integrals are
\[
w_1(z) = \int_\infty^{(\rho(z),\rho'(z))}\frac{d\rho}{y}, \qquad
w_2(z) = \int_\infty^{(\rho(z),\rho'(z))}\frac{\rho\,d\rho}{y}.
\]

The scalar inversion is
\[
\rho(z) = -\frac{\sigma_1(w(z))}{\sigma_2(w(z))}, \qquad \sigma(w(z)) = 0.
\]

The point factors are given by \eqref{eq:point-factor-sigma} with $g=2$:
\[
\widehat{u}_j(z) = \sqrt{\beta_j}\frac{\sigma(w(z)+w(\mu_j))}{\sigma_2(w(z))\sigma_2(w(\mu_j))}\exp\!\left[-h_{j1}\Delta w_1(z) - h_{j2}\Delta w_2(z)\right],
\]
where $\Delta w_k(z) = w_k(z) - w_k(z_0)$ and the constants $h_{j1}, h_{j2}$ are given by \eqref{eq:h-formulas}.

The complete mode solutions are
\begin{equation}\label{ex:g2-modes}
\begin{aligned}
u_j(z) &= u_j(z_0)\frac{\widehat{u}_j(z)}{\widehat{u}_j(z_0)}\exp\!\left[c_{j1}\Delta w_1(z) + c_{j2}\Delta w_2(z)\right],\\
v_j(z) &= v_j(z_0)\frac{\widehat{v}_j(z)}{\widehat{v}_j(z_0)}\exp\!\left[-c_{j1}\Delta w_1(z) - c_{j2}\Delta w_2(z)\right].
\end{aligned}
\end{equation}

These solve the canonical system for Hamiltonian
\[
H = \prod_{j=1}^{5} u_j^{\beta_j} + \prod_{j=1}^{5} v_j^{\beta_j} - Q(u_1v_1,\ldots,u_5v_5),
\]
with $H=0$ and $I_j-\beta_j I_5=\gamma_j-\beta_j\gamma_5$ for $j<5$.

\subsection{Genus-three example with quadratic corrections}

This example demonstrates the mode-solution Theorem~\ref{thm:mode-solutions}
for a seven-pair system on a smooth genus-three curve with quadratic
polynomial corrections.

\subsubsection{Polynomial data}

Choose degeneracies and power offsets:
\begin{equation}\label{ex:g3-data}
\beta = (2,1,1,1,1,1,1), \qquad \gamma = \left(-\frac{3}{5},-\frac{1}{4},-\frac{1}{5},-\frac{1}{10},\frac{1}{5},\frac{2}{5},\frac{11}{20}\right).
\end{equation}
Note $\sum_{j=1}^{7}\beta_j = 8$ and $\sum_{j=1}^{7}\gamma_j = 0$. The centred power convention gives
\[
\rho = -\frac{\sum_{j=1}^7 x_j}{\sum_{j=1}^{7}\beta_j} = -\frac{x_1+\cdots+x_7}{8}.
\]

Define the polynomial $Q(x_1,\ldots,x_7)$ in independent power variables:
\begin{equation}\label{ex:g3-Q}
\begin{aligned}
Q(x_1,\ldots,x_7)&=4\left(-\frac{S(x)}8+\frac{37}{40}\right)\left(-\frac{S(x)}8+\frac3{20}\right)
\left(-\frac{S(x)}8-\frac3{10}\right)\left(-\frac{S(x)}8-\frac{13}{20}\right)\\
&\quad+\frac17\left(\frac{x_1}{2}-x_2+\frac1{20}\right)x_3^2,
\qquad S(x)=\sum_{k=1}^{7}x_k.
\end{aligned}
\end{equation}

The first term depends only on $\sum_{j=1}^{7} x_j$. The second term vanishes on the power line $x_j = \gamma_j - \beta_j\rho$ but has nonzero partials.

\subsubsection{The explicit coupled equations}

For the full system, set $I(z)=(u_1(z)v_1(z),\ldots,u_7(z)v_7(z))$. The polynomial multiplying each $u_j$ is $\Phi_j^{(3)}(I)=-Q_j(I)$; the same polynomial enters the $v_j$ equation with the opposite sign. It is useful to display these terms before restricting to the scalar trajectory. Define
\begin{equation}\label{ex:g3-common-phase}
A_3(x)=-\frac{S(x)^3}{256}+\frac{3S(x)^2}{1024}+\frac{11S(x)}{128}+\frac{249}{6400},\qquad S(x)=\sum_{k=1}^7x_k.
\end{equation}
Then
\begin{equation}\label{ex:g3-phases}
\begin{aligned}
\Phi_1^{(3)}(x)&=A_3(x)-x_3^2/14,\\
\Phi_2^{(3)}(x)&=A_3(x)+x_3^2/7,\\
\Phi_3^{(3)}(x)&=A_3(x)-2(x_1/2-x_2+1/20)x_3/7,\\
\Phi_j^{(3)}(x)&=A_3(x),\qquad j=4,\ldots,7.
\end{aligned}
\end{equation}
The coupled polynomial equations are
\begin{equation}\label{ex:g3-coupled}
\begin{aligned}
u_1'&=\Phi_1^{(3)}(I)u_1+2v_1\prod_{k=2}^7v_k,\\
v_1'&=-\Phi_1^{(3)}(I)v_1-2u_1\prod_{k=2}^7u_k,\\
u_j'&=\Phi_j^{(3)}(I)u_j+v_1^2\prod_{\substack{2\le k\le 7\\k\ne j}}v_k,\\
v_j'&=-\Phi_j^{(3)}(I)v_j-u_1^2\prod_{\substack{2\le k\le 7\\k\ne j}}u_k,
\qquad j=2,\ldots,7.
\end{aligned}
\end{equation}
Every multiplicative term is thus an explicit polynomial in the paired products $u_kv_k$. The separate first equation pair shows the factor $2$ and the extra mode factor required by the double weight. The remaining equations have the unit-weight coupling products.

\subsubsection{Restriction to the power line}

On the power line $x_j = \gamma_j - \beta_j\rho$, the restriction gives
\begin{equation}\label{ex:g3-q}
q(\rho) = 4\left(\rho + \frac{37}{40}\right)\left(\rho + \frac{3}{20}\right)\left(\rho - \frac{3}{10}\right)\left(\rho - \frac{13}{20}\right).
\end{equation}

The weighted power product is
\begin{equation}\label{ex:g3-r}
\begin{aligned}r(\rho)={}&4\left(\rho + \frac{3}{10}\right)^{2}\left(\rho + \frac{1}{4}\right)\left(\rho + \frac{1}{5}\right)\left(\rho + \frac{1}{10}\right)\\&\times \left(\rho - \frac{1}{5}\right)\left(\rho - \frac{2}{5}\right)\left(\rho - \frac{11}{20}\right).\end{aligned}
\end{equation}

The degree-eight terms cancel in $F=q^2-4r$, giving
\begin{equation}\label{ex:g3-F}
\begin{aligned}F(\rho)={}&4 \rho^{7} - \frac{1539 \rho^{6}}{100} + \frac{74 \rho^{5}}{125} + \frac{160751 \rho^{4}}{20000}\\&- \frac{176721 \rho^{3}}{100000} - \frac{7797647 \rho^{2}}{16000000} + \frac{5803941 \rho}{80000000} + \frac{19247121}{1600000000}.\end{aligned}
\end{equation}
Exact squarefreeness gives a smooth genus-three curve with leading coefficient $4$.

\subsubsection{Polynomial corrections}

Write $B_j(\rho)=c_{j1}+c_{j2}\rho+c_{j3}\rho^2$. The exact coefficients are listed in Table~\ref{tab:g3-gauges}.

\begin{table}[htbp]
\centering
\begingroup
\normalsize
\renewcommand{\arraystretch}{1.4}
\setlength{\tabcolsep}{7pt}
\begin{tabular}{lrrrrrrr}\toprule
$j$ & 1 & 2 & 3 & 4 & 5 & 6 & 7\\\midrule
$c_{j1}$ & $- \dfrac{112129}{224000}$ & $- \dfrac{19673}{44800}$ & $- \dfrac{12343}{32000}$ & $- \dfrac{8151}{32000}$ & $\dfrac{4233}{32000}$ & $\dfrac{8489}{32000}$ & $\dfrac{7397}{32000}$\\[8pt]
$c_{j2}$ & $\dfrac{1551}{2800}$ & $\dfrac{191}{280}$ & $\dfrac{263}{400}$ & $\dfrac{277}{400}$ & $\dfrac{223}{400}$ & $\dfrac{107}{400}$ & $- \dfrac{11}{200}$\\[8pt]
$c_{j3}$ & $\dfrac{261}{560}$ & $\dfrac{65}{112}$ & $\dfrac{27}{80}$ & $\dfrac{11}{80}$ & $- \dfrac{37}{80}$ & $- \dfrac{69}{80}$ & $- \dfrac{93}{80}$\\[4pt]
\bottomrule\end{tabular}
\endgroup
\caption{Genus-three polynomial corrections, generated by restriction of the full partial derivatives of $Q$.}\label{tab:g3-gauges}
\end{table}

The cubic terms cancel, leaving $\deg B_j=2<g=3$, although $Q$ is quartic. All three coordinates contribute through $\sum_{k=1}^3c_{jk}\Delta w_k$. As in genus two, the term vanishing on the power line changes the individual corrections without changing $q$ or the scalar curve.
\FloatBarrier

\subsubsection{Mode solutions}

The Abel integrals are
\[
w_k(z) = \int_\infty^{(\rho(z),\rho'(z))}\frac{\rho^{k-1}d\rho}{y}, \qquad k=1,2,3.
\]

The scalar inversion is
\[
\rho(z) = -\frac{\sigma_{13}(w(z))}{\sigma_{23}(w(z))}, \qquad \sigma(w(z)) = \sigma_3(w(z)) = 0.
\]

The point factors are given by \eqref{eq:point-factor-sigma} with $g=3$:
\[
\widehat{u}_j(z) = \sqrt{\beta_j}\frac{\sigma_3(w(z)+w(\mu_j))}{\sigma_2(w(z))\sigma_2(w(\mu_j))}\exp\!\left[-\sum_{k=1}^3 h_{jk}\Delta w_k(z)\right],
\]
where the constants $h_{j1}, h_{j2}, h_{j3}$ are given by \eqref{eq:h-formulas}.

The complete mode solutions are
\begin{equation}\label{ex:g3-modes}
\begin{aligned}
u_j(z) &= u_j(z_0)\frac{\widehat{u}_j(z)}{\widehat{u}_j(z_0)}\exp\!\left[\sum_{k=1}^3 c_{jk}\Delta w_k(z)\right],\\
v_j(z) &= v_j(z_0)\frac{\widehat{v}_j(z)}{\widehat{v}_j(z_0)}\exp\!\left[-\sum_{k=1}^3 c_{jk}\Delta w_k(z)\right].
\end{aligned}
\end{equation}

These solve the canonical system for the seven-pair Hamiltonian with quadratic mode corrections.

\subsection{Special polynomials and second-order mode equations}\label{sec:special-examples}

These examples demonstrate the special-polynomial classification in
Theorem~\ref{thm:special-Q} and the second-order equations in
Theorems~\ref{thm:special-W}--\ref{thm:special-potential}.
In each case $B_j=0$, so the modes are the corrected point factors
in~\eqref{eq:point-factor-sigma}, normalised by their initial values
as in Theorem~\ref{thm:mode-solutions}, using the curve reparametrization
\eqref{eq:general-leading-normalization} when needed.
Each pair has a nonzero constant Wronskian and solves a common equation
with polynomial potential. Below, $\psi_j$ denotes either $u_j$ or $v_j$,
and primes refer to the Hamiltonian propagation variable $z$.

\paragraph{Constant energy in genus two.}

Take
\begin{equation}
 \boldsymbol\beta=(2,1,1,1),\qquad
 \boldsymbol\gamma=(-3,-1,1,3),\qquad Q=1.
\end{equation}
The fixed coordinates are $(-3/2,-1,1,3)$ and all original ordinates are
$b_j=1$. The scalar polynomial is
\begin{equation}\label{eq:g2-F}
 F(\rho)=16\rho^5-124\rho^3-108\rho^2+108\rho+109.
\end{equation}
It defines a smooth genus-two curve. The Wronskians are $(-2,-1,-1,-1)$.
The first pair solves $\psi_1''=G_1(\rho)\psi_1$, where
\begin{equation}\label{eq:g2-potential}
 G_1(\rho)=16\rho^3-18\rho^2-44\rho+6.
\end{equation}
Since $I_1=u_1v_1=-3-2\rho$, its self-generating form is explicitly
\begin{equation}\label{eq:special-g2-self}
 \psi_1''=-\left(2I_1^3+\frac{45}{2}I_1^2+59I_1+\frac{45}{2}\right)\psi_1.
\end{equation}

\paragraph{Constant energy in genus three.}

Take
\begin{equation}
 \boldsymbol\beta=(2,1,1,1,1,1),\qquad
 \boldsymbol\gamma=(-5,-3,-1,1,3,5),\qquad Q=1.
\end{equation}
The fixed coordinates are $(-5/2,-3,-1,1,3,5)$ and again $b_j=1$.
Here
\begin{equation}\label{eq:g3-F}
 \begin{aligned}
 F(\rho)={}&16\rho^7-460\rho^5-500\rho^4+3144\rho^3\\
          &+5000\rho^2-2700\rho-4499.
 \end{aligned}
\end{equation}
This is a smooth genus-three curve. The first Wronskian is $-2$ and the
others are $-1$. The first pair solves $\psi_1''=G_1(\rho)\psi_1$, with
the degree-five potential
\begin{equation}\label{eq:special-g3-potential}
 G_1(\rho)=24\rho^5-50\rho^4-360\rho^3+300\rho^2+1072\rho-90.
\end{equation}
Its polynomial in the paired power is obtained by substituting
$\rho=-(I_1+5)/2$, since $I_1=u_1v_1=-5-2\rho$.

\paragraph{A nonconstant special polynomial in genus two.}

Take $\boldsymbol\beta=(5,5)$, $\boldsymbol\gamma=(-1,1)$ and
\begin{equation}\label{eq:nonconstant-Q}
 Q(I_1,I_2)=6250g_5^{(-5/2,-5/2)}(I_1/5,I_2/5)+1.
\end{equation}
The total weight is ten. The coefficient $\kappa=6250=2\cdot5^5$
cancels the degree-ten terms in $q^2-4r$. Direct restriction gives
\begin{equation}\label{eq:nonconstant-q}
 q(\rho)=6250\rho^5-625\rho^3+\frac{75}{4}\rho+1,
\end{equation}
and the scalar curve is
\begin{equation}\label{eq:nonconstant-F}
 F(\rho)=12500\rho^5+\frac{3125}{2}\rho^4-1250\rho^3
               -\frac{2375}{16}\rho^2+\frac{75}{2}\rho+5.
\end{equation}
It is squarefree of degree five and hence has genus two. The fixed points
have coordinates $a_1=-1/5$, $a_2=1/5$ and ordinates $b_1=1/4$,
$b_2=7/4$, respectively. Their Wronskians are $-5/4$ and $-35/4$.
Both correction polynomials vanish identically, despite the quintic $Q$.
Both pairs solve $\psi_j''=G_j(\rho)\psi_j$, $j=1,2$, with
\begin{equation}\label{eq:special-nonconstant-potentials}
 \begin{aligned}
 G_1(\rho)&=12500\rho^3-\frac{5625}{8}\rho^2
                         -\frac{2125}{4}\rho+\frac{1025}{64},\\
 G_2(\rho)&=12500\rho^3+\frac{24375}{8}\rho^2
                         -\frac{875}{4}\rho-\frac{3775}{64}.
 \end{aligned}
\end{equation}
Substituting $\rho=a_j-I_j/5$ expresses each cubic in its paired power
$I_j=u_jv_j$. The individual-mode reconstruction applies through~\eqref{eq:general-leading-normalization}.
For $\deg q=5$ and $g=2$, Section~\ref{sec:longer-columns} prescribes a nine-row determinant, but these two distinct coordinates cannot supply its eight distinct fixed rows. This example would require confluent determinants to represent the complete products.

\section{Numerical validation}\label{sec:numerical}

We compare the sigma solutions with independent numerical integration of the full polynomial canonical equations, first on an exterior genus-two branch and then over three oscillations of a bounded genus-three branch. Numerous symbolic and numerical checks were carried out in Python; the code is provided in the repository identified in the code availability statement.

Both comparisons use 30-digit working precision. This specifies the arithmetic, not the achieved accuracy: the observed discrepancies and their reduction under RK4 step halving provide the accuracy evidence. Let $s$ denote the state vector containing all $u_j$ and $v_j$ modes. The scaled mode discrepancy at an observation time is
\[
E_{\rm mode}=\max_\ell\frac{|s_\ell^{\rm RK4}-s_\ell^\sigma|}{\max(1,|s_\ell^\sigma|)}.
\]
The sigma solution is sampled on a sparse grid, while RK4 uses a finer grid containing those observation times. Step halving distinguishes finite-step integration error from arithmetic precision. At each observation time, numerical inversion of the scalar Abelian integral determines $\rho(z)$. We then evaluate the Abel vector and the prescribed sigma formulas for the individual modes; the scalar sigma ratio provides a consistency check. The polynomial and exponential coefficients are prescribed analytically, without fitting to trajectory data.

Table~\ref{tab:analytic-checks} reports the existing sigma-based consistency checks separately from the RK4 comparisons. The scalar sigma ratio is compared with the quadrature-inverted coordinate; paired powers and complete products are compared with their prescribed curve expressions. Coordinate discrepancies are divided by $\max(1,|\rho|)$, paired-power discrepancies by $\max(1,|I_j^\sigma|,|\gamma_j-\beta_j\rho|)$, and complete-product discrepancies by $\max(1,|U^\sigma|)$ or $\max(1,|V^\sigma|)$. Here the superscript $\sigma$ denotes evaluation from the sigma mode formulas. The genus-three coordinate lies below one in magnitude, so its absolute and scaled coordinate discrepancies coincide. These checks test the coordinate representation and algebraic identities at the computed Abel vectors; they do not replace scalar inversion at prescribed propagation or provide certified bounds on all sigma-evaluation errors.

\begin{table}[htbp]
\centering\small
\begin{tabular}{lcc}\toprule
Maximum scaled discrepancy & Genus two & Genus three\\\midrule
Scalar sigma coordinate & $5.672682\times10^{-31}$ & $9.423342\times10^{-30}$\\
Paired modal powers & $8.376260\times10^{-31}$ & $5.994113\times10^{-29}$\\
Complete product $U$ & $1.461370\times10^{-30}$ & $8.992481\times10^{-30}$\\
Complete product $V$ & $2.264083\times10^{-30}$ & $2.818472\times10^{-29}$\\
\bottomrule\end{tabular}
\caption{Sigma-based consistency checks at 30-digit working precision, over 33 genus-two observations and 97 uniform genus-three observations plus 18 turning-point and modal-zero observations. Values are maxima from the supplied numerical reports.}\label{tab:analytic-checks}
\end{table}

\subsection{Extended genus-two trajectory}

For the first example, we propagate from $\rho(z_0)=4.5$ to $\rho=10$ on the positive sheet. The propagation length is $T=0.02960302978454372$. We use the Hamiltonian, weights and invariant offsets of the first worked example, with initial modes in the proportions
\[
(u_1(z_0),\ldots,u_5(z_0))\ \propto\
(1.1+0.2i,\;0.8-0.1i,\;1.2+0.1i,\;1+0.1i,\;1.1+0.1i).
\]
Their common factor is the principal sixth root that enforces
\[
u_1(z_0)^2\prod_{j=2}^5u_j(z_0)=\frac{q(4.5)+\sqrt{F(4.5)}}2,\qquad
v_j(z_0)=\frac{\gamma_j-4.5\beta_j}{u_j(z_0)}.
\]
This specifies the initial state on the prescribed invariant level.

The largest real branch point is approximately $1.92552416$. Our exterior branch therefore has $\rho'(z)=\sqrt{F(\rho(z))}>0$ throughout this interval, and its powers are monotone rather than oscillatory. Since $F(\rho)\sim4\rho^5$, the remaining scalar propagation integral to infinity is finite. We stop at $\rho=10$ within a regular chart; the figures do not test passage through a turning point or continuation through infinity. The complex initial modes do not impose $v_j=\overline{u_j}$.

The sigma formulas are evaluated at 33 equally spaced propagation values. Independent RK4 integrations of the full ten-component polynomial field use 32,768, 65,536 and 131,072 uniform steps over the same interval. Every comparison time is a node of each grid, so no interpolation is used. The finest integration supplies 513 saved states for the plotted lines. Table~\ref{tab:g2-extended} reports the maximum of the scaled mode error over all ten components and all 33 observations, rather than only the endpoint.

\begin{table}[htbp]
\centering\small
\begin{tabular}{rccc}\toprule
Steps & Step size & Maximum mode error & Reduction factor\\\midrule
32,768 & $9.0341277\times10^{-7}$ & $2.1807079\times10^{-20}$ & ---\\
65,536 & $4.5170639\times10^{-7}$ & $1.3630311\times10^{-21}$ & $15.998960$\\
131,072 & $2.2585319\times10^{-7}$ & $8.5192239\times10^{-23}$ & $15.999475$\\
\bottomrule\end{tabular}
\caption{Extended genus-two trajectory at 30-digit working precision. Step halving gives the expected fourth-order reduction in RK4 error.}\label{tab:g2-extended}
\end{table}

For the finest grid, the maximum scaled Hamiltonian and weighted-invariant drifts at the observation times are $7.6908187\times10^{-24}$ and $1.2009690\times10^{-22}$. The Hamiltonian drift is divided by $\max(1,|U|,|V|,|Q|)$ at each observation; the drift of $C_j=I_j-\beta_jI_5$ is divided by $\max(1,|\gamma_j-\beta_j\gamma_5|)$. These results distinguish arithmetic precision from finite-step integration accuracy.

The scalar coordinate at each observation is obtained by inversion of its signed quadrature, and the modes are then evaluated from their sigma expressions.

\begin{figure}[!htbp]
\centering\includegraphics[width=\linewidth,height=0.75\textheight,keepaspectratio]{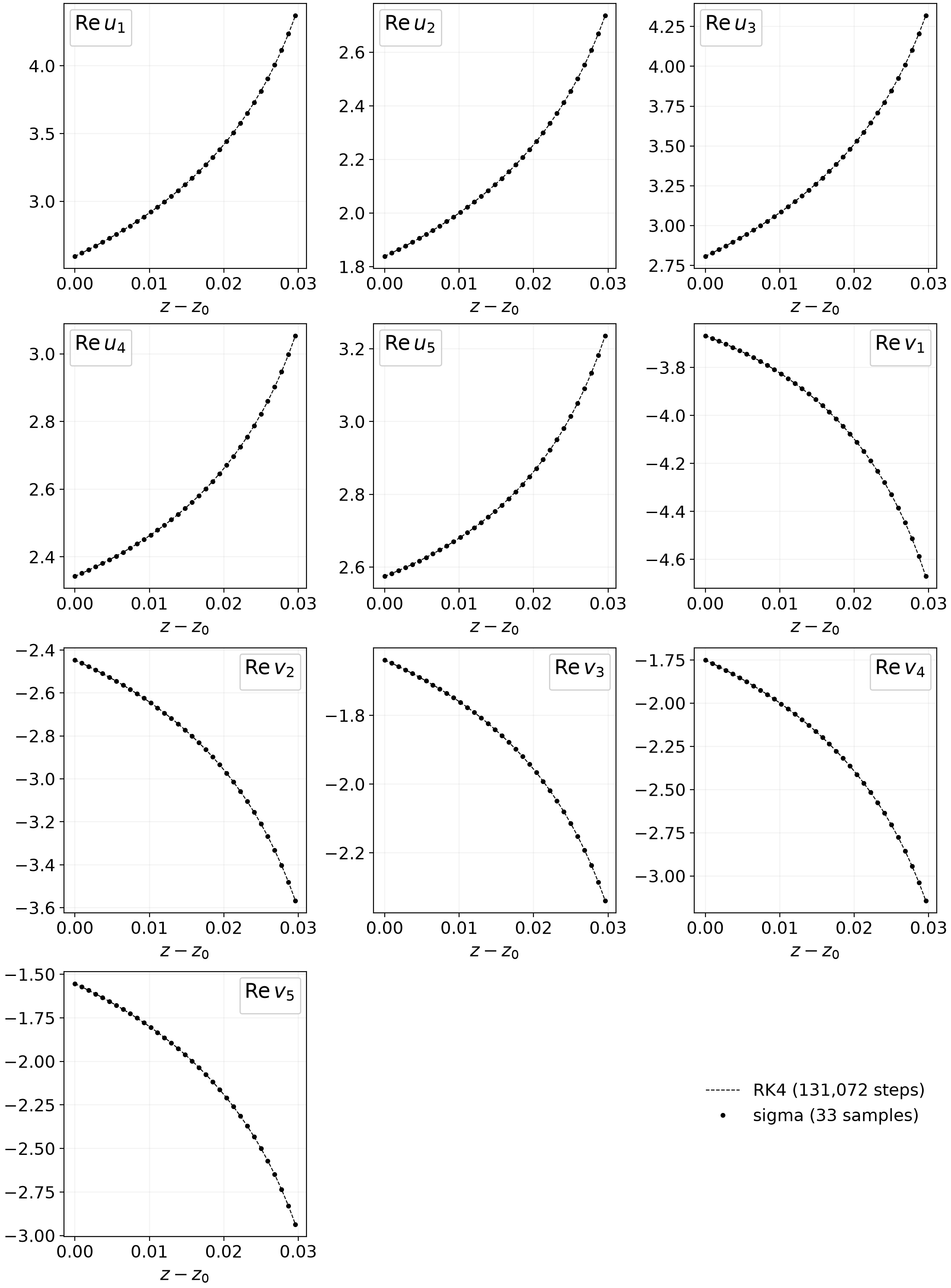}
\caption{Real parts of all five genus-two $u_j$ and $v_j$ modes, ordered across rows with the $u_j$ group first, on the positive-sheet trajectory $4.5\le\rho\le10$, $0\le z-z_0\le0.02960303$. Thin dashed black lines show full-system RK4 with 131,072 steps; black markers show 33 evaluations of the respective sigma mode formulas. Actual amplitudes are shown without individual rescaling.}\label{fig:g2-modes}\label{fig:g2-v-modes}
\end{figure}

\FloatBarrier

\subsection{Oscillatory genus-three trajectory}\label{sec:g3-oscillatory}

The genus-three example admits bounded scalar motion on the same Hamiltonian and invariant level. All seven roots of \eqref{ex:g3-F} are real and simple. Between the fifth and sixth roots,
\[
L\approx0.3505953510803235,\qquad R\approx0.5739427200213239,
\]
we have $F(\rho)\ge0$. The scalar coordinate oscillates between these turning points, so the powers $I_j=\gamma_j-\beta_j\rho$ are periodic. Their period is
\[
T=2\int_L^R\frac{d\rho}{\sqrt{F(\rho)}}\approx6.624556082295094.
\]
We follow three cycles, over $0\le z-z_0\le3T\approx19.87366824688528$.

We start at $\rho(z_0)=(L+R)/2$ with $\rho'(z_0)=+\sqrt{F(\rho(z_0))}$. The initial $u_j$ are positive real numbers in the proportions
\[
(u_1(z_0),\ldots,u_7(z_0))\ \propto\ (1.1,\;0.8,\;1.2,\;1,\;1.1,\;0.9,\;1.05).
\]
Their common scale is determined by the initial product condition
\[
u_1(z_0)^2\prod_{j=2}^7u_j(z_0)
 =\frac{q(\rho(z_0))+\rho'(z_0)}2,\qquad
v_j(z_0)=\frac{\gamma_j-\beta_j\rho(z_0)}{u_j(z_0)}.
\]
The complete modes are then given by \eqref{ex:g3-modes}. The sixth and seventh powers cross zero at $\rho=0.4$ and $0.55$; the polynomial coupled equations remain regular there. These initial conditions do not impose $v_j=\overline{u_j}$, so the signed powers are not positive optical intensities.

For continuation, we follow the signed curve point $(\rho(z),\rho'(z))$ through each turning point and integrate the Abel differentials along that path, keeping $\Delta w_1=z-z_0$. A regular curve parameter removes the square-root singularity of the $\rho$ coordinate at a turning point. The Abel vector retains its accumulated periods as the point completes successive cycles. Both $u_j$ and $v_j$ are evaluated from their own sigma expressions, which allow mode zeros without division by a vanishing mode.

Figures~\ref{fig:g3-scalar}--\ref{fig:g3-v-modes} show the scalar coordinate, periodic powers and individual modes. The modes can have growing or decaying envelopes even when their products repeat: the sigma expressions acquire reciprocal cycle multipliers for $u_j$ and $v_j$. After three scalar cycles, for example, $u_6(z_0+3T)/u_6(z_0)\approx-1.0997562$ and $u_7(z_0+3T)/u_7(z_0)\approx-0.025748382$. The plots retain these envelopes and sign reversals.

Independent full-system RK4 integrations are compared with 97 equally spaced sigma evaluations on common grid nodes. Table~\ref{tab:g3-oscillatory} reports the maximum scaled mode discrepancy over all fourteen components and all 97 observations.

\begin{table}[htbp]
\centering\small
\begin{tabular}{rccc}\toprule
Steps & Step size & Maximum mode error & Reduction factor\\\midrule
24,576 & $8.0866163\times10^{-4}$ & $2.5450058\times10^{-13}$ & ---\\
49,152 & $4.0433082\times10^{-4}$ & $1.5925384\times10^{-14}$ & $15.980813$\\
98,304 & $2.0216541\times10^{-4}$ & $9.9593318\times10^{-16}$ & $15.990414$\\
\bottomrule\end{tabular}
\caption{Full-system RK4 refinement over three genus-three scalar cycles. The same 97 sigma observations are used on all grids, at 30-digit working precision.}\label{tab:g3-oscillatory}
\end{table}

The step-halving factors are close to 16, as expected for fourth-order RK4. On the finest grid, the maximum scaled Hamiltonian and weighted-invariant drifts are $1.5017527\times10^{-17}$ and $3.6848896\times10^{-18}$, with the same scaling conventions as above and $C_j=I_j-\beta_jI_7$. The mode discrepancy of approximately $10^{-15}$ measures finite-step integration error over three cycles, separately from the precision of the sigma evaluations.

\begin{figure}[p]
\centering\includegraphics[width=\linewidth]{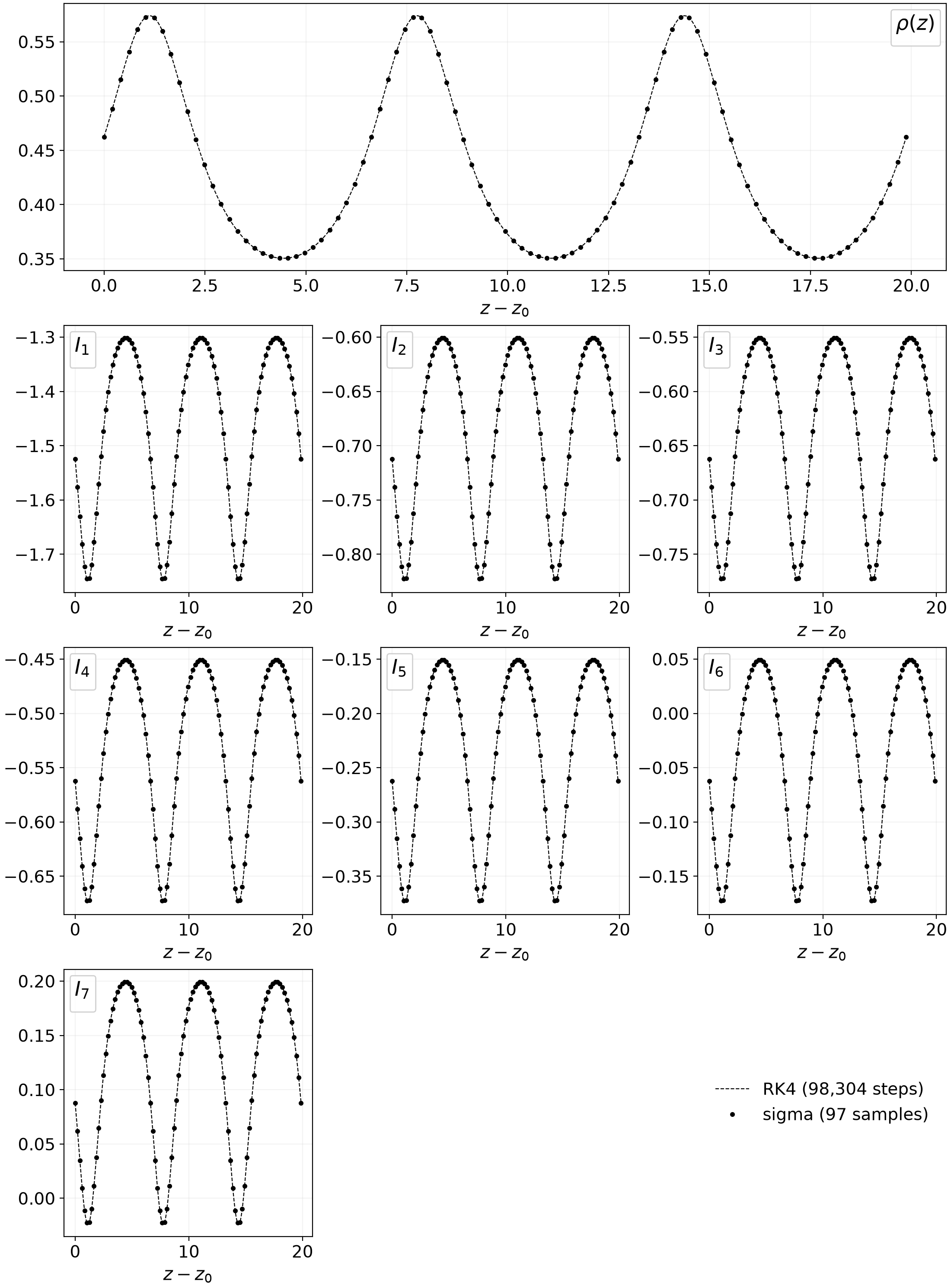}
\caption{Genus-three scalar coordinate $\rho$ (top) and all seven signed powers $I_j$ over three cycles, $0\le z-z_0\le19.87366825$. Their common oscillations show the affine relations $I_j=\gamma_j-\beta_j\rho$; the sixth and seventh powers cross zero. Thin dashed black lines show full-system RK4 with 98,304 steps; black markers show 97 sigma observations, using the coordinate ratio for $\rho$ and the mode products for $I_j$.}\label{fig:g3-scalar}\label{fig:g3-powers}
\end{figure}
\begin{figure}[p]
\centering\includegraphics[width=\linewidth]{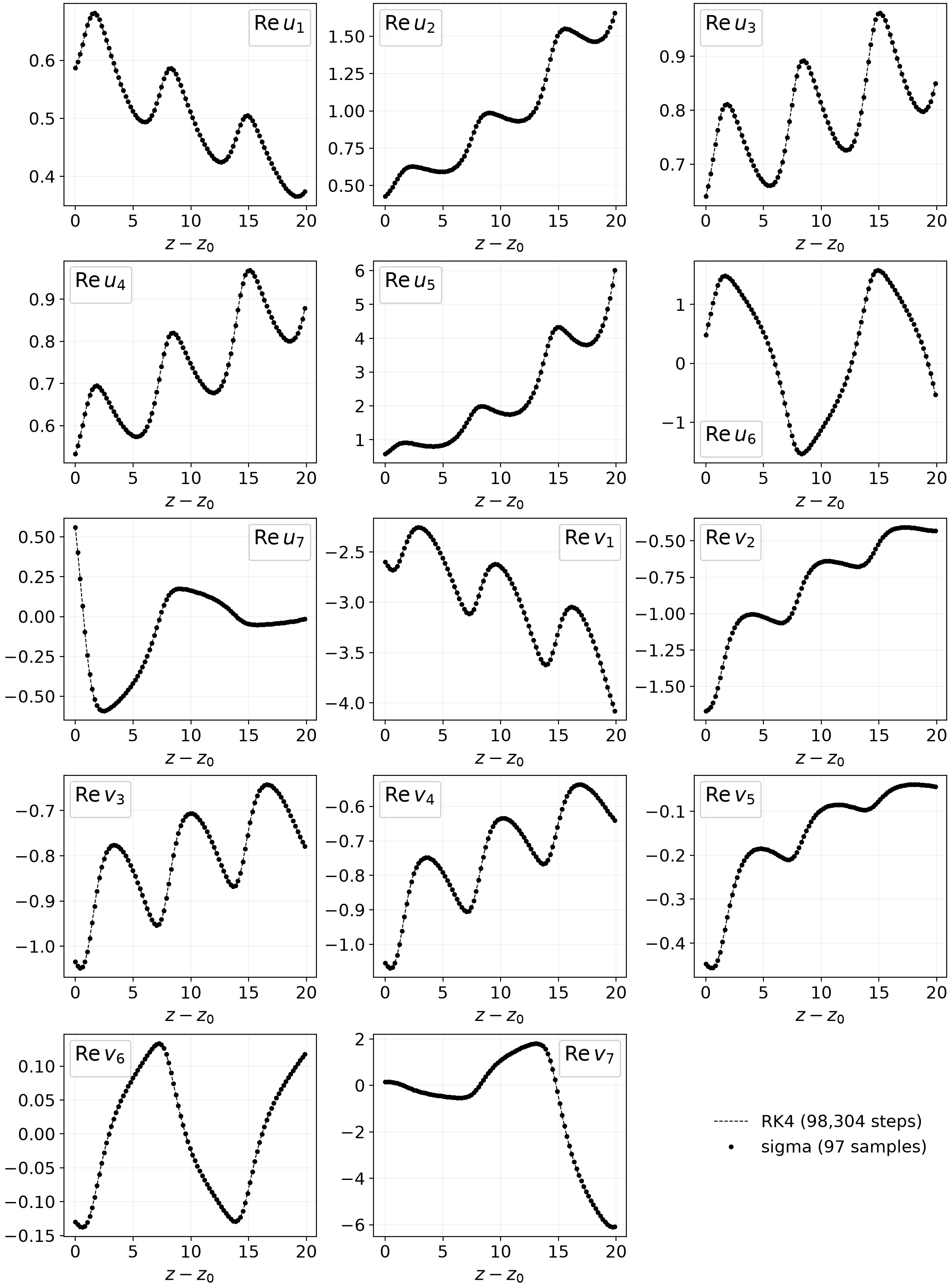}
\caption{Real parts of all seven genus-three $u_j$ and $v_j$ modes over the same three scalar cycles as Figure~\ref{fig:g3-scalar}, ordered across rows with the $u_j$ group first. Thin dashed black lines show full-system RK4; black markers show direct sigma mode evaluations. Actual amplitudes retain their cycle multipliers, envelopes and sign reversals.}\label{fig:g3-modes}\label{fig:g3-v-modes}
\end{figure}

\FloatBarrier

\subsection{Special-polynomial examples}

The three examples in Section~\ref{sec:special-examples} were checked
by exact algebra, independent integration of the Hamiltonian and
second-order equations, and sigma reconstruction of both modes with
analytic first and second derivatives. The verification records 154 exact
algebra checks, 30 exact normalisation checks and 680 numerical sigma
checks. At 35-digit working precision on the tested short regular paths,
the maximum scaled sigma discrepancies are respectively
$1.25\times10^{-33}$, $1.75\times10^{-31}$ and $9.27\times10^{-33}$.

Following~\eqref{eq:general-leading-normalization}, the curve ordinate is rescaled for sigma evaluation so that the
coefficient of the highest power of $\rho$ is four. For the first two
examples, use $\tau=2z$ and $Y=\rho'/2$, giving $Y^2=F/4$.
For the third, use $\tau=25\sqrt5\,z$ and $Y=\rho'/(25\sqrt5)$,
giving $Y^2=F/3125$. The fixed ordinates are divided by the same factors,
and derivatives are converted back to $z$. These changes apply to the
curve evaluation; the modes, Wronskians and second-order equations
displayed in Section~\ref{sec:special-examples} retain the original
Hamiltonian normalisation.

\section{Discussion}\label{sec:discussion}

The explicit solutions raise two further questions: what the conserved quantities establish about integrability, and how the local formulas behave beyond a chosen chart. We separate the general analytic construction from the scope of the examples and identify the remaining continuation and spectral problems.

\subsection{Scope of the construction}

The construction combines scalar reduction with individual-mode integration for the polynomial Hamiltonian class considered here. It requires a smooth odd-degree curve, distinct weighted fixed mode points and a regular mode chart; curve evaluation uses the leading-four normalization~\eqref{eq:general-leading-normalization}. Polynomial corrections integrate through algebraic endpoint terms and first- and second-kind differentials. For $\deg B_j\le g-1$ only first-kind differentials are needed, as in the first two examples. The determinant column list is determined by the scalar degree; the sigma product factorization uses internally distinct main and residual lists with compatible balanced lifts. The determinant construction provides an additional interpretation of the complete mode products; its distinct-point partition assumptions do not restrict Theorem~\ref{thm:mode-solutions}. Universally special polynomials are classified in Theorem~\ref{thm:special-Q}; their correction is identically zero. When $b_j\ne0$, each elementary mode pair forms a fundamental pair for a second-order equation with a self-generating polynomial potential, as established in Section~\ref{sec:polynomial-potentials}. This recovers the $n=1$ Lam\'e equation in genus one and gives higher-degree polynomial potentials in the higher-genus examples. Integer weights, including a double-weight mode, enter through the power line and weighted products.

Regularized second-kind integrals define the exponential coefficients in arbitrary genus. The short fixed-point sigma expressions and full canonical trajectory checks in this paper cover genera two and three. Scalar coordinate recovery is given by the arbitrary-genus derivative ratio, with an additional genus-four formula and a separate check at regular curve points. Finite sigma-based coefficient extraction is available in every genus; uniform short fixed-point expressions remain a separate question. Singular curves and even-degree models require further treatment.

For genus one the same structural split gives a shifted Weierstrass sigma ratio with a zeta exponential. Its sole Abel coordinate still satisfies $w_1(z)=w_1(z_0)+z-z_0$. In higher genus only $w_1$ evolves linearly, the one-point image lies on the theta divisor, and sharp/flat derivatives provide the mode kernel. The scalar inversion ratios are separate restrictions to that image.

At zero energy, the branch-point mode formulas also provide a route to
generalized Jacobi functions associated with extended Euler systems,
including those studied by Ferrer, Crespo and Molero~\cite{ferrer-ees}.
Their interpretation as normalized one-point restrictions of hyperelliptic
$\mathrm{al}$-functions, together with their polynomial second-order
equations, will be explored separately.

\subsection{Commuting integrals and local integrability}

The weighted powers in \eqref{eq:weighted-invariants}, together with $H$, give $N$ commuting integrals. With the canonical bracket
\[
\{A,B\}=\sum_{j=1}^{N}\left(\partial_{u_j}A\,\partial_{v_j}B
-\partial_{v_j}A\,\partial_{u_j}B\right),
\]
the powers commute mutually and satisfy $\{I_j,H\}=\beta_j(V-U)$. Hence $\{C_j,H\}=\{C_j,C_k\}=0$.

On the open set where every $I_j\ne0$, the differentials $dI_j$ are independent and the $N-1$ weighted combinations $dC_j$ have rank $N-1$. The tangent variation $u_1\mapsto e^t u_1$, $v_1\mapsto e^{-t}v_1$ preserves every power but gives $dH/dt=\beta_1(U-V)$. Therefore $dH$ is independent of the $dC_j$ where $U\ne V$. This proves complex local Liouville integrability on that open set; algebraic complete integrability is not established here. The short regular trajectory charts satisfy this condition. The oscillatory genus-three path also reaches turning points and modal zeros outside that open set; the independence argument above applies to its regular segments. A real compact-torus interpretation would additionally require reality conditions, regularity and compactness of the real invariant levels.

\subsection{Continuation and spectral questions}

The elementary Baker--Akhiezer restrictions and the complete modes are evaluated locally on chosen Abel lifts. Exponentials of Abel integrals and sigma quasi-periods can produce multipliers under continuation; global meromorphicity on the compact curve is not asserted. Determining their monodromy, period structure, and a corresponding spectral problem or Lax representation remains open. The scalar system is regular at finite turning points. The three-cycle calculation in Section~\ref{sec:g3-oscillatory} illustrates the resulting mode multipliers on one real oval; their structure on all cycles of the curve remains to be determined.

For repeated weights, the number of coordinate pairs $N$ differs from the total weight $\sum_{j=1}^{N}\beta_j=\deg r$; the examples exploit that distinction and polynomial cancellation. It would be useful to characterize the most economical mode counts compatible with smooth curves and the distinct-coordinate determinant partition, and to extend the sigma product argument to confluent determinants.

Further questions concern short sigma-only formulas for the higher-genus second-kind primitives, even-degree curve models, and conditions on $Q,\gamma_j,\beta_j$ compatible with real positive powers and physical conjugacy. The explicit low-genus examples provide a starting point for those questions while retaining a precise local domain for the proved formulas.

\section*{Code availability}

The project repository, \url{https://github.com/HeskethGD/hyp-ham-sys}, contains the computational notebooks, numerical verification scripts and figure-generation routines used in this work, together with the manuscript source, executed notebook exports and saved numerical results.
The software is distributed under the MIT license, with its scope specified in the repository; that license does not apply to the manuscript or its figures. Dependency versions and execution instructions are recorded in the repository.

\end{document}